\titleclass{\subsubsubsection}{straight}[\subsection]
\newcounter{subsubsubsection}[subsubsection]
\renewcommand\thesubsubsubsection{\thesubsubsection.\arabic{subsubsubsection}}
\def\toclevel@subsubsubsection{4}
\def\l@subsubsubsection{\@dottedtocline{4}{7em}{4em}}
\crefname{equation}{}{}
\crefname{lem}{Lemma}{Lemmas}
\crefname{section}{Section}{Sections}
\crefname{subsubsubsection}{Section}{Sections}
\crefname{rem}{Remark}{Remarks}
\crefname{figure}{Figure}{Figures}
\crefname{table}{Table}{Tables}
\Crefname{lem}{Lemma}{Lemmas}
\crefname{thm}{Theorem}{Theorems}
\Crefname{thm}{Theorem}{Theorems}
\newtheorem{thm}{Theorem}[section]
\newtheorem{example}{Example}[section]
\newtheorem{remark}{Remark}[section]
\newtheorem{lem}{Lemma}[section]
\newtheorem{proposition}[thm]{Proposition}
\newtheorem{corollary}[thm]{Corollary}
\theoremstyle{definition}
\theoremstyle{definition}
\newtheorem{defn}{Definition}[section]
\renewenvironment{proof}{\noindent {\bf Proof.}}{\qed}
\title{Adversarial Perturbations of Opinion Dynamics in Networks}
\author{Jason Gaitonde\footnote{Cornell University, Department of Applied Mathematics. Supported by NSF grant CCF-1408673 and AFOSR grant F5684A1. Email: jsg355@cornell.edu.},\, Jon Kleinberg\footnote{Cornell University, Department of Computer Science. Supported in part by a Simons Investigator Award, a
Vannevar Bush Faculty Fellowship, a MURI grant, a MacArthur Foundation grant, and AFOSR grant F5684A1. Email: kleinberg@cornell.edu.},\, \'Eva Tardos\footnote{Cornell University, Department of Computer Science. Supported in part by NSF grant CCF-1408673,  CCF-1563714, and AFOSR grant F5684A1. Email: eva.tardos@cornell.edu.}}
\date{July 2020}
\begin{document}

\maketitle

\begin{abstract}

In this paper we study the connections between network structure, opinion dynamics, and an adversary's power to artificially induce disagreements. We approach these questions by extending models of opinion formation in the mathematical social sciences to represent scenarios, familiar from recent events, in which external actors have sought to destabilize communities through sophisticated information warfare tactics via fake news and bots. In many instances, the intrinsic goals of these efforts are not necessarily to shift the overall sentiment of the network towards a particular policy, but rather to induce discord. These perturbations will diffuse via opinion dynamics on the underlying network, through mechanisms that have been analyzed and abstracted through work in computer science and the social sciences. 

Here we investigate the properties of such attacks, considering optimal strategies both for the adversary seeking to create disagreement and for the entities tasked with defending the network from attack. By employing spectral techniques, we show that for different formulations of these types of objectives, different regimes of the spectral structure of the network will limit the adversary's capacity to sow discord; in fact, somewhat surprisingly, the \emph{entire} spectrum can be relevant, rather than just the extreme eigenvectors. Via the strong connections between spectral and structural properties of graphs, we are able to qualitatively describe which networks are most vulnerable or resilient against these perturbations. We then consider the algorithmic task of a network defender to mitigate these sorts of adversarial attacks by insulating nodes heterogeneously; we show that, by considering the geometry of this problem, this optimization task can be efficiently solved via convex programming. Finally, we generalize these results to allow for \emph{two} network structures, where the opinion dynamics process and the measurement of disagreement become uncoupled; for instance, this may arise when opinion dynamics are controlled an online community via social media, while disagreement is measured along ``real-world'' connections. We characterize conditions on the relationship between these two graphs that will determine how much power the adversary gains when this occurs. 

\end{abstract}

\section{Introduction}

People's opinions are shaped by their interactions with others.
The resulting process of {\em opinion dynamics} --- the interplay
between opinion formation and the network structure of these
interactions --- has been of great interest in the political science,
sociology, economics, and computer science communities among others.

A range of mathematical models have been proposed for opinion
dynamics in social networks, representing people as nodes in the network.
In typical models, each node in a weighted graph $G=(V,E)$ holds
a real-valued opinion that updates in response to
the values in their network neighborhood. 
In perhaps the most natural model, the
repeated-averaging model of DeGroot, each node repeatedly takes a weighted
average of its own opinion with those of its neighbors; using rich connections between stochastic matrices and Markov chains, it
is possible to show that these dynamics lead to all individuals
sharing a \emph{consensus} opinion in the limit
\cite{degroot1974reaching}. Much work in the economics and social science
literature has studied fundamental questions in this basic model
\cite{golub2010naive}.

A typical feature of real-world networks, however, is that even
approximate consensus is rarely, if ever, achieved. For that reason,
richer models like the Friedkin-Johnsen variation of the
standard DeGroot model \cite{friedkin1990social} or the
bounded-confidence Hegselmann-Krause model
\cite{hegselmann2002opinion} have been proposed to give plausible
explanations for how these long-term disagreements can persist while
maintaining the strong intuition that one will account for and adjust
to the opinions and expressed behaviors of those near them. In this
work, we will take the Friedkin-Johnsen model as a primitive; these
dynamics closely follow the basic DeGroot model, but allow each node
to hold their own immutable internal opinion which always gets
incorporated during the repeated averaging.\\

\textbf{Discord and Adversarial Perturbations.}
These types of mathematical models have been used extensively to study 
both the phenomena associated with consensus and polarization,
as well as the role of interventions designed to shift opinions, 
in which targeting a few individual opinions can shift many others
via the network dynamics.
While the questions have been quite diverse, they share
a crucial underlying assumption, which is that the entities
intervening in the process have a preferred opinion that they
are seeking to promote.

In the real-world settings that motivate these models, however,
a number of important developments have been commanding increasing
attention in the past few years.
One of these is the rising concern about situations in which outside
entities intervene not to promote a specific opinion, but 
instead to induce a {\em lack} of consensus, 
with the explicit intention of
sowing discord. For instance, the U.S. Department of Justice Special
Counsel's Office argues in their 2017 indictment of the Russian
Internet Research Agency (IRA) that the IRA used social media and
targeted advertising with ``a strategic goal to sow discord in the
U.S. political system, including the 2016 U.S. presidential election''
\cite{iraindictment}. This behavior has been seen in multiple countries,
and is not limited to any one actor. 
For example, the doctrine of offensive information warfare by the U.S. 
intelligence community includes provisions for the instigation
of discord between opposing parties \cite{nrc-covert-action}.
More recently, Twitter
disclosed in 2019 that external bots made a coordinated effort to ``sow
political discord in Hong Kong'' with the aim of making protesters
less effective at organizing during the Hong Kong independence
movement \cite{twitter_2019}. It is to be expected that as social
media becomes more and more prevalent, these sorts of external
influences and behaviors will become more frequent.

In such instances, the apparent aim of these attacks is not directly
to promote a specific policy agenda, but rather simply to induce
disagreement; correspondingly, these external actors strive for more
abstract objectives and therefore exhibit more complicated behavior.
For example, analysis of Twitter data in the past few years has
revealed how bots sponsored in some cases by state actors
managed to play both sides of some existing
issue to exploit divisions along racial lines, pro- and anti-vaccine
groups, NFL kneeling, and gun reform, among other matters
\cite{senatereport, broniatowski2018weaponized,diresta2019tactics}.

For these types of scenarios, in which an adversary is seeking
to promote disagreement, we need to look beyond existing 
mathematical formulations of opinion maximization and influence in networks.
We show how to define a new set of questions in the
underlying models of opinion dynamics based on an adversary whose
objective function is some measure of node-to-node disagreement.
In this paper, we formalize such questions; 
roughly speaking, we will start from instances of the 
Friedkin-Johnsen model in which all nodes' internal opinions are at equilibrium, e.g., they are all
equal to $0$, and then we will allow an adversary to perturb
these internal opinion subject to some radius constraint modeling resource constraint of the attacker.
These seeded
internal opinions then diffuse across the 
network to reach an equilibrium under the Friedkin-Johnsen dynamics, 
at which point some measure of disagreement among adjacent nodes is
measured.
Our main goal is to characterize the
interplay between graph structure and the adversary's power: which
networks are most susceptible to or resilient to these kinds of 
perturbations; how can the adversary most efficiently spend 
a bounded amount of resource in order to produce the maximum disagreement;
and how can we most effectively defend a network against such attacks?
We will find that these types of questions bring into play
new network properties that have not typically arisen in 
the analysis of traditional questions around consensus, polarization,
and opinion maximization.

\subsection{Our Results}

\textbf{Spectral Properties of Discord.} In this work
an adversary seeking to maximize 
some measure of {\em discord} among the nodes of a network
would like to perturb the internal opinions of the nodes by
a bounded amount in order to produce as much discord as possible
under the resulting opinion dynamics.

We first show that many such
objectives for such an actor can be naturally posed as the
maximization of an associated quadratic form of the underlying graph
Laplacian. The simple, but crucial observation is that {this
directly connects the adversary's power in attaining their objective
with the eigenvalues of some function of the underlying graph}. The
graph Laplacian is an extremely well-studied object, whose spectral
properties are known to be good approximations of combinatorial
features of the underlying network; this connection will allow us to
qualitatively understand what structural properties of the graph
matter in these different settings.

For different discord objectives, we find that
different regimes of the spectral structure of the underlying networks
characterize the ability of this adversary to succeed in their aim.
In perhaps the most natural case, we define the disagreement 
on an edge of the network to be the squared difference of the opinions
at the edge's endpoints; and we define the disagreement of the network
to be the sum of the disagreements on all its edges.
The challenge for an adversary in maximizing disagreement is that
the same network is being used both by the opinion dynamics to
average away the disagreement, and by the objective function to measure
the disagreement.
We can study this effect in terms of a key one-dimensional parameter
$t$ inherent in the Friedkin-Johnsen opinion dynamics model,
controlling the relative weight of a node's internal opinion and
neighboring opinions in determining how it updates its opinion
in the next time period.
We find, strikingly, that as $t$ ranges from $0$ to $\infty$,
the optimal strategy for the adversary is determined by a sequence
of vectors that ranges over precisely the set of eigenvectors of
the graph Laplacian.
This means that for an adversary seeking to induce maximum disagreement,
the \emph{entire} eigenstructure of the network is relevant.

This dependence on the entire eigenstructure forms a strong contrast
with standard graph objective functions that have a spectral interpretation,
where we typically find a dependence only on the extreme eigenvectors.
Here, by contrast, {\em each} eigenvector controls the optimal choice
for the adversary at some setting of the opinion dynamics.

We also consider additional discord objectives, finding that
the contrast extends here too, with the adversary's optimal behavior
for these other objectives dependent on extreme eigenvectors.
In particular, when the adversary's objective is to maximize
the polarization-disagreement index of Musco, Musco, and Tsourakakis
\cite{musco2018minimizing}, the ability of the adversary is directly
controlled by the second smallest eigenvalue of the graph Laplacian.
Somewhat surprisingly, an infinite-time horizon form of the maximum
disagreement objective is also completely determined by the second
smallest eigenvalue, despite not holding for the one-period time
horizon mentioned above. Through these well-studied connections
between the spectral and combinatorial structures of graphs, we can
then qualitatively describe which networks, under the Friedkin-Johnsen
dynamics, are robust to these sorts of adversarial perturbations.\\

\textbf{Network Defense.} In Section \ref{section:defense}, we
generalize the basic model and consider the converse problem of
\emph{minimizing} the ability of such an adversary to promote
their objective. Following the general principle that the network
topology ought be somewhat difficult to meaningfully change, we allow
a network defender to \emph{insulate} the individuals in the network
heterogeneously. Intuitively, this corresponds to making these
individuals more difficult for an adversary to influence via their
internal opinion. This might be achieved, for instance, through
technological literacy initiatives to make it more difficult for
certain individuals to be susceptible to the normal modes by which an
external actor may seed internal opinions, like bots or fake news.

Formally, we will consider a two-player min-max game, where a network
defender can first heterogeneously \emph{weigh} different individuals
in the network (subject to some explicit normalization); the
interpretation is that the defender can make some individuals less
malleable to seeding. The adversary then solves their objective
function as before, but subject to these new weights. The constraints
on the defender's choice will include the option to keep the
individuals' weights the same as above, so that this optimization
problem for the defender is upper-bounded by the previous analyses. On
the algorithmic side, our main technical contribution in this section
is to show that while the adversary's inner maximization problem no longer has the well-known analytic connection to eigenvalues, for many choices of the adversary's objective and
weight normalization, the network defender's problem can be globally
solved using convex programming. We show this by analyzing the geometry of the defender and adversary's optimization problems. In particular, this implies
the network defender's problem can be solved efficiently.\\

\textbf{Mixed-Graph Objectives.} 
As noted above, the underlying network is playing two distinct
roles in our basic model of disagreement: 
it serves to average opinions of neighboring nodes, and it also 
serves to measure the disagreement between neighbors in the
resulting equilibrium of the opinion dynamics.
But if we look at current discussions of the ways in which
online interaction may be exacerbating conflict among different
opinions, a crucial theme is the way in which the online content
that shapes our opinions is being transmitted across networks that
may be quite different from the social networks linking us to the
people with whom we interact on a daily basis.

These considerations suggest that we return to the adversary's
disagreement maximization
problem, but when the network where disagreement is measured can be different
from the network where the Friedkin-Johnsen opinion dynamics take
place. The individuals and communities one interacts with on social
media can be quite global in nature, as well as relatively anonymous
and self-selecting; in contrast, one's real-world interactions are far
more dictated by other factors like geography and occupation. It is
thus quite natural to think of opinion formation as occurring somewhat
orthogonally to these real-world interactions. To sow discord, an
adversary might hope to induce disagreement measured via these
real-world connections, but must do so filtered through the opinion
dynamics in the online world. Intuitively, if
the opinion graph which does the smoothing of internal opinions and
the disagreement graph look very different, an adversary will be much
more powerful than what is possible in the basic model 
when these graphs must be the same.

To understand this setup, we generalize the adversary's problem with
\emph{mixed-graph} objective functions in the natural way by
considering two distinct graphs at once. One might expect that the
adversary is most limited when the opinion graph and disagreement
graphs are the same, as then the opinion dynamics smooth out the
seeded opinions along the exact same network where disagreements will
be measured. Somewhat surprisingly, we show that it is not always the
case, even subject to some natural degree normalization; there are
explicit, simple disagreement graphs where a different opinion
formation graph provably lowers the adversary's ability to induce
disagreement by a non-negligible amount. However, we also give a
general lower bound that suggests that these examples are somewhat
pathological in that, typically, having different graphs will help the
adversary in this aim compared to the single-graph setting. For other
disagreement graphs, it will in fact turn out that they are optimal
for themselves.

On the one hand, if the measurement and disagreement graphs are highly
similar, edge-by-edge, then the adversary's basic optimization problem
has not really changed from the single-graph setting. This is indeed
the case, via an easy continuity argument, but \emph{not necessary}.
To that end, we show that if the two graphs are good spectral
approximations for each other, then the objective of the adversary is
essentially unchanged. It is well-known using celebrated results of
Batson, Spielman, and Srivastava the every graph, no matter how dense,
has a spectral approximation corresponding to much sparser graph
\cite{batson2012twice}. In particular, this implies that it is not
necessary at all for the two graphs to be \emph{physically} quite
similar, only \emph{spectrally} quite similar.

In the converse direction, it is still quite intuitive that the more
dissimilar the two graphs are, the more an adversary will be able to
induce disagreement. We show that spectral dissimilarity,
appropriately defined, is sufficient for this purpose. Perhaps more
interestingly, by specializing to combinatorial structure, it will
follow that if the two graphs are misaligned in the sense that their
cut structure differs by a large amount along any subset of nodes,
then necessarily the adversary is far more capable of inducing
disagreement than in the single-graph setting. This provides some
theoretical justification for why external actors have been able to
effectively sow discord; this can happen when opinion formation and
disagreement become uncoupled, as is possible with social media platforms.

\subsection{Related Work}

\jgedit{To our knowledge, the focus on interventions to increase discord
in the Friedkin-Johnsen model is new.} In its emphasis on interventions in such models more generally, 
our work has connections to that of
Gionis, et al \cite{gionis2013opinion}. In their work, the authors establish
the $\mathsf{NP}$-hardness of a natural opinion \emph{maximization}
problem work in the Friedkin-Johnsen dynamics. They also derive a
random walk interpretation of these dynamics to establish
submodularity of their problem, enabling a polynomial-time
$(1-1/e)$-approximation algorithm using the natural greedy algorithm.
We focus on different optimization tasks motivated by recent
events aimed at sowing discord in opinion dynamics, 
and moreover, we do not focus on the computational hardness of
such tasks. Indeed, in our setting, the optimization problems will be
computationally tractable. Instead, our emphasis will be on the
relationship between the graph and the adversary's power in achieving
their aims.

Subsequent work by Musco, Musco, and Tsourakakis
\cite{musco2018minimizing} considered a problem, given either a
fixed graph topology or fixed initial opinions, of determining the
optimal choice of the other to minimize their
polarization-disagreement index metric. Their main results shows that
both of these problems can be efficiently solved via convex
optimization. Our approach deviates from theirs in multiple ways: for
one, motivated by these recent events, we are interested in a
\emph{worst-case}, adversarial analysis, not in finding best-case
settings. Moreover, our general philosophy is that the graph topology
for the dynamics is typically quite immutable; the problems we
consider will treat the underlying graphs as fixed. We will also be
principally concerned with analytically characterizing an adversary's
power in these settings in relation to the underlying graph. In a
different direction, prior work by Bindel, Kleinberg, and Oren
interprets the Friedkin-Johnsen dynamics as the Nash equilibrium of a
natural quadratic cost function \cite{bindel2015bad}; they then
characterize the price of anarchy of this equilibrium, namely the
ratio between the cost at equilibrium with that of the global
optimizer. By using spectral techniques, they establish an upper bound
of $9/8$ on this ratio, thereby showing that the equilibrium solution
is a good minimizer of the global cost function.

 \jgedit{Closest to the motivation for our work is an independent paper by Chen and R\'acz \cite{chen2020network} who consider the power of an adversary seeking to induce discord within the Friedkin-Johnsen dynamics, but subject to a sparsity constraint. Their approach relies primarily on carefully bounding the quantities that arise in the associated Laplacian matrices, but without the use of spectral techniques. We explain the connections between our approach and theirs in Section \ref{sec:racz}, and demonstrate how combining our spectral approach with their analysis can sharpen some of the results in their setting.}
 
 From the economics literature, the techniques we use in Section \ref{section:formulation} are similar to those of Galeotti, Golub, and Goyal \cite{galeotti2017targeting}, though with differing motivations. In their work, they consider optimal interventions in the context of \emph{network games} to induce favorable or unfavorable Nash equilibria from the perspective of social welfare, whereas our motivation comes from opinion dynamics. Where these works intersect is via the connection in \cite{bindel2015bad} that the outcome of the Friedkin-Johnsen dynamics can be viewed as the equilibrium behavior of agents in a certain game with quadratic costs; therefore, under certain assumptions on the underlying network and agent utilities, the behavior of agents is equivalent in both settings. Where these approaches then differ is that we treat this equilibrium only as a \emph{behavioral assumption} and thus consider adversaries that attempt to optimize with respect to \emph{different} objectives from the opinion dynamics literature, \emph{not} with respect to social welfare from the game that induces this behavior; \jgedit{for instance, given our differing motivations, we are interested in quantities like inter-agent discord and polarization, which are not equivalent to their social welfare}. This distinction leads to our different conclusions on adversary behavior in Section \ref{section:formulation}, where the entire spectral structure can be exploited by the adversary as opposed to just the extremal parts. We remark that Galeotti, Golub, and Goyal consider optimal interventions in their setting under more general initial conditions via studying the KKT conditions at optimality.

Our current work has thematic elements of all these  papers;
motivated by current events, we consider optimization
tasks in these dynamics. 
As mentioned above, we will largely view the underlying
graph topology as fixed,  and we will consider instead the ability of
an adversary to induce discord by supplying internal opinions
on some issue that previously was at consensus. We will be primarily
concerned with the interplay between graph structure and the
adversary's ability to do so. The opinion dynamics serve to
equilibrate any seeded opinions by an external actor, thus dampening
the effect of the adversary; however, the connections in the graph can
also heighten the resulting disagreement. We also study these
competing effects using primarily spectral techniques and convex
optimization. The network defense problem and mixed-graph objectives
we consider are, to the best of our knowledge, novel to this
literature. These models allow for opinion dynamics and disagreement
to be slightly orthogonal processes and cannot be completely
characterized using the spectral theory of a single graph Laplacian.
We hope that these kinds of generalizations may prove an interesting
direction for future work.

\section{Preliminaries}
\textbf{Notation and Background.}
We start with briefly reviewing some basic facts from spectral graph theory that we will use.
In this work, we will consider simple, undirected, weighted graphs $G=(V,E,w)$, where $\vert V\vert=n$ and $w:E\to \mathbb{R}_{>0}$. We will usually write $m=\sum_{(i,j)\in E} w(i,j)$ for the sum of the weights of all edges in $G$; in the unweighted case, this is just the total number of edges. One can equivalently think of $G$ as being a complete graph, with $w(i,j)=0$ if and only if $(i,j)\not\in E$. We will usually require $G$ to be connected. The adjacency matrix $A\in \mathbb{R}^{n\times n}$ is defined by $A_{i,j}=A_{j,i}=w(i,j)$. Let $D$ be the diagonal degree matrix given by $D_{i,i}=\sum_{j:(i,j)\in E} w(i,j)$ and $0$ off the diagonal. Then the Laplacian matrix of $G$ is given by $L:=D-A$. It is well-known that $L=\sum_{(i,j)\in E} w(i,j)(\mathbf{e}_i-\mathbf{e}_j)(\mathbf{e}_i-\mathbf{e}_j)^T$, where $\mathbf{e}_i\in \mathbb{R}^n$ is the $i$th standard basis vector, and that the quadratic form induced by $L$ is given by
\begin{equation*}
    \mathbf{x}^T L\mathbf{x}=\sum_{(i,j)\in E} w(i,j)(\mathbf{x}(i)-\mathbf{x}(j))^2.
\end{equation*}
It is immediate that $L$ is symmetric and positive semidefinite. In general, $I$ will denote the identity matrix of appropriate dimension.


We use standard notation for the Loewner (positive semidefinite) order, i.e. $M_1\succeq M_2$ if and only if $M_1-M_2\succeq 0$ if and only if $M_1-M_2$ is positive semidefinite. For any connected graph $G$ as above, it is well-known that $L\succeq 0$ and that
\begin{equation*}
    L=\sum_{i=1}^n \lambda_i \mathbf{v}_i\mathbf{v}_i^T,
\end{equation*}
where $0=\lambda_1< \lambda_2\leq \ldots\leq \lambda_n$, the $\mathbf{v}_i$ are orthonormal eigenvectors, and $\mathbf{v}_1=\mathbf{1}/\sqrt{n}$, where $\mathbf{1}$ is the all-ones vector in $\mathbb{R}^n$. We will write $V_i$ for the set of vectors of unit length in the $\lambda_i$-eigenspace of $L$. Observe that if $\lambda_i\neq \lambda_j$ for all $i\neq j$, then $V_i=\{\pm \mathbf{v}_i\}$. For more on the spectral theory of graphs, see for instance \cite{chung1997spectral, godsil2013algebraic}.

Given a symmetric matrix $X\in \mathbb{R}^{n\times n}$ with eigendecomposition as above (though with not necessarily nonnegative eigenvalues) and function $f:\mathbb{R}\to \mathbb{R}$, we define
\begin{equation*}
    f(X)=\sum_{i=1}^n f(\lambda_i) \mathbf{v}_i\mathbf{v}_i^T.
\end{equation*}
Note that if we stipulate that $f(y)\geq 0$ for $y\geq 0$, then if $X\succeq 0$, $f(X)\succeq 0$. For $X\succeq 0$, we write $\|X\|$ for the operator norm, or equivalently, the largest eigenvalue.

We will write $\|\mathbf{x}\|_p=\big(\sum_{i=1}^n \vert\mathbf{x}(i)\vert^p\big)^{1/p}$ for the usual $\ell_p$ norms with $1\leq p\leq \infty$, and $\|\mathbf{x}\|_0=\vert \{i\in [n]:x_i\neq 0\}\vert$ for the sparsity-``norm'' (note that this is not a norm in the usual sense). We will be interested in this paper in the setting of $p=0,1,2,\infty$. We will also consider \emph{weighted}-$\ell_2$ norms, defined for a vector $\mathbf{w}\in \mathbb{R}^n_{>0}$ by $\|\mathbf{x}\|_{\mathbf{w}}=\sqrt{\sum_{i=1}^n \mathbf{w}(i)\mathbf{x}(i)^2}$. Note that with this convention, $\|\cdot\|_2=\|\cdot\|_{\mathbf{1}}$.\\

\noindent\textbf{Friedkin-Johnson Dynamics.}
In this paper we will assume that opinions evolve using the Friedkin-Johnson dynamics \cite{friedkin1990social} (FJ dynamics), which we describe next. The dynamic
is specified by an undirected simple graph $G=(V,E,w)$ and an initial, internal opinion vector $\mathbf{s}\in \mathbb{R}^n$. Starting with $\mathbf{z}^{(0)}=\mathbf{s}$, each node $i\in V$ updates her opinion by taking the weighted average of her neighbors in $G$, \emph{as well her own internal opinion}:
\begin{equation*}
    \mathbf{z}^{(t+1)}(i)=\frac{\mathbf{s}(i)+\sum_{j:(i,j)\in E} w(i,j) \mathbf{z}^{(t)}(j)}{1+\sum_{j:(i,j)\in E} w(i,j)}.
\end{equation*}
Notice that these equations implicitly normalize a weight of $1$ on one's private opinion. It is well-known that these dynamics converge to a fixed point, and the limiting final opinion vector is given by
\begin{equation*}
    \mathbf{z}=(I+L)^{-1}\mathbf{s},
\end{equation*}
where $L$ is the Laplacian of $G$.

\section{Adversarial Optimization with Friedkin-Johnsen Dynamics}
\label{section:formulation}
Throughout this section,
we will consider variations on the following question: what initial opinions should an adversary induce in a network maximize some objective function after the opinions diffused
according to the Friedkin-Johnsen dynamics?
And moreover, what qualitative features of the graph make it robust to these perturbations? 

We will show how many interesting and natural adversarial objectives can be posed as \emph{quadratic forms} related to the underlying graph Laplacian $L$. Moreover, we show how different interesting regimes of the spectrum and eigendecomposition of $L$ dictate the adversary's power to achieve their objective for different variants. Formally, we will typically (though not always) be concerned in various formulations of the following optimization problem for an adversary:
\begin{equation}
\label{eq:advproblem}
    \max_{\mathbf{s}\in \mathbb{R}^n: \|\mathbf{s}\|_2\leq R} \mathbf{s}^T(I+L)^{-1} f(L) (I+L)^{-1}\mathbf{s},
\end{equation}
where $f:\mathbb{R}\to \mathbb{R}$ satisfies $f(y)\geq 0$ for $y\geq 0$; this restriction is made to ensure that the above quadratic form is nonnegative.\footnote{\jgedit{We discuss different forms of budget constraints later in this section. We shall see that spectral analysis can be used to study such constraints, though the connection to graph structure is not as clear.}}

The interpretation is that, on some issue that is currently at consensus in the graph, the adversary will first supply initial opinions $\mathbf{s}$, for instance via fake news or targeted advertisements. This is done subject to a fixed norm constraint, which corresponds to a limited budget to perturb initial opinions in this way. These opinions then diffuse and become ``smoothed'' through the underlying graph $G$ via the Friedkin-Johnsen dynamics. The goal of the adversary is to choose these initial opinions in order to maximally induce some desired effect, \emph{knowing} that whatever initial opinions it seeds, the opinion dynamics dictated by the underlying graph will inevitably partially equilibrate them. Of course, we will be interested in functions $f$ where the above holds some sort of physical meaning.

Two of the goals we consider for the adversary are the notions of disgareement and polarization introduced by
Musco, Musco, and Tsourakakis \cite{musco2018minimizing}:
\begin{defn}
Given any opinion vector $\mathbf{x}\in \mathbb{R}^n$ and undirected graph $G$, the \emph{disagreement} of $G$ with opinions $\mathbf{x}$ as
\begin{equation*}
    \sum_{(i,j)\in E} w(i,j) (\mathbf{x}(i)-\mathbf{x}(j))^2=\mathbf{x}^T L\mathbf{x},
\end{equation*}
where $L$ is the graph Laplacian of $G$.

Similarly, given just an opinion vector $\mathbf{x}$, we write 
\begin{equation*}
    \overline{\mathbf{x}}:=\mathbf{x}-\frac{\mathbf{x}^T \mathbf{1}}{n}\cdot \mathbf{1}= \big(I-\frac{1}{n}\mathbf{1}\mathbf{1}^T\big)\mathbf{x}
\end{equation*}
as the \emph{de-meaned} version of $\mathbf{x}$ obtained by subtracting off the average of $\mathbf{x}$ from each component. Then the \emph{polarization} of $\mathbf{x}$ is
\begin{equation*}
    \|\overline{\mathbf{x}}\|_2^2=\sum_{i=1}^n \overline{\mathbf{x}}(i)^2.
\end{equation*}
In words, the polarization is a measure of variance for $\mathbf{x}$.
\end{defn}

\subsection{Disagreement}
\label{section:disagreement}
First, we consider an adversary that seeds initial opinions subject to a norm constraint with the goal of \emph{maximizing disagreement}. In the framework above, this can be realized via the choice $f(y)=y$, and yields the simple objective function
\begin{equation}
\label{eq:disagreement}
    \max_{\mathbf{s}\in \mathbb{R}^n: \|\mathbf{s}\|_2\leq R}\mathbf{s}^T(I+L)^{-1}L(I+L)^{-1}\mathbf{s}.
\end{equation}

The crucial feature of this objective is the fact that $L$ both dictates the measurement of disagreement, and the opinion dynamics themselves. Somewhat surprisingly, for this objective function, the \emph{entire} eigenstructure of the graph can matter; as $L$ is scaled in importance compared to $I$, the adversary passes through each eigenspace in order as they optimize.
\begin{thm}
\label{thm:dis}
For any graph $G$, the objective function (\ref{eq:disagreement}) is upper bounded by $\frac{R^2}{4}$. Moreover, if $G$ is connected and we consider the family of problems (\ref{eq:disagreement}) with $L(t)=tL$ for $t>0$, \jgedit{then for any $i> 1$, there exists a $t>0$ such that $R\cdot V_i$ is an optimizer for this value of $t$.}
\end{thm}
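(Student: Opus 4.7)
The plan is to reduce the problem to spectral analysis by observing that $M := (I+L)^{-1} L (I+L)^{-1}$ is a scalar function of $L$ in the sense introduced in the preliminaries, and hence shares the orthonormal eigenbasis $\{\mathbf{v}_i\}_{i=1}^n$ of $L$, with eigenvalues $\lambda_i/(1+\lambda_i)^2$. By the Rayleigh--Ritz variational characterization, the maximum of $\mathbf{s}^T M \mathbf{s}$ over $\|\mathbf{s}\|_2 \leq R$ equals $R^2 \|M\|$, attained precisely at vectors of the form $R \mathbf{v}$ for $\mathbf{v}$ a unit vector in the top eigenspace of $M$.

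For the upper bound, I would examine the one-variable function $g(\lambda) = \lambda/(1+\lambda)^2$ on $[0, \infty)$. A brief derivative calculation yields $g'(\lambda) = (1-\lambda)/(1+\lambda)^3$, so $g$ is strictly unimodal with unique maximum at $\lambda = 1$ where $g(1) = 1/4$. Applied eigenvalue-by-eigenvalue, this gives $\|M\| \leq 1/4$, hence the bound of $R^2/4$.

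For the second claim, replacing $L$ by $L(t) = tL$ makes the corresponding matrix in the quadratic form equal to $\sum_i g(t\lambda_i) \mathbf{v}_i \mathbf{v}_i^T$. Fix any $i > 1$; connectedness ensures $\lambda_i > 0$, so the choice $t = 1/\lambda_i$ is legitimate, and the $j$th eigenvalue of this matrix becomes $g(\lambda_j/\lambda_i)$. By strict unimodality of $g$, this is maximized precisely at those indices $j$ with $\lambda_j = \lambda_i$, achieving the value $1/4$. Hence the top eigenspace of the new matrix coincides with the $\lambda_i$-eigenspace of $L$, so every vector in $R \cdot V_i$ attains the optimum $R^2/4$, as claimed.

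The only nontrivial computation is the elementary one-variable calculus for $g$; once this is in hand, both parts follow essentially automatically from the shared diagonalization of $L$, $(I+L)^{-1}$, and $L(I+L)^{-2}$. The conceptual point worth highlighting is that as $t$ sweeps over $(0,\infty)$, the ``resonant'' condition $t\lambda_i = 1$ is satisfied by each positive eigenvalue in turn, so the entire spectrum of $L$ is activated across the family of problems, corroborating the informal statement in the introduction.
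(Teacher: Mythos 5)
Your proposal is correct and follows essentially the same route as the paper: diagonalize $(I+L)^{-1}L(I+L)^{-1}$ in the eigenbasis of $L$, analyze $g(\lambda)=\lambda/(1+\lambda)^2$ via its derivative to get the $1/4$ cap, and observe that $t=1/\lambda_i$ makes the $\lambda_i$-eigenspace the top eigenspace of the resulting matrix. Your remark that the maximum at $t=1/\lambda_i$ is attained \emph{precisely} on the $\lambda_i$-eigenspace is a nice touch that also clarifies the paper's footnote about non-eigenvector optimizers at other values of $t$.
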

\begin{proof}
By the variational characterization of eigenvalues of symmetric matrices, this amounts to understanding the spectrum of the matrix $(I+L)^{-1}L(I+L)^{-1}$. This matrix has eigenvalues 
\begin{equation*}
    \frac{\lambda_i(L)}{(1+\lambda_i(L))^2}, \quad i=1,\ldots,n,
\end{equation*}
with the same corresponding eigenvectors as $L$. In particular, the optimal value of (\ref{eq:disagreement}) turns out to be $R^2\cdot \bigg(\max_{i\in [n]} \frac{\lambda_i(L)}{(1+\lambda_i(L))^2}\bigg)$. Consider now the function $g(x)=\frac{x}{(1+x)^2}$. It is easy to compute
$g'(x) = \frac{1-x}{(1+x)^3}$,
from which it immediately follows that $g$ is increasing for $0\leq x\leq 1$ and decreasing for $x>1$, attaining a peak of $1/4$ at $x=1$. This immediately gives the upper bound, where equality holds if and only if $\lambda_i(L)=1$ for some $i\in [n]$.

For the second statement, as we vary $L(t)=t\cdot L$ with $t>0$, the eigenvectors of $L$ remain the same, but the eigenvalues scale as $t\cdot \lambda_2,\ldots, t\cdot \lambda_n$. As $t$ varies strictly between $0$ to infinity, each nonzero eigenvalue of $t\cdot L$ passes through the peak of the function $g(x)$ at $x = 1$ when $t=1/\lambda_i(L)$. This implies that for this value of $t$, the set of optimizers is $R\cdot V_i$, the length $R$ vectors in $V_i$, the $\lambda_i$-eigenspace.\footnote{We remark that for certain values of $t$, there may exist optimizers that are \emph{not} eigenvectors of $L$. The reason is that because $g$ is not injective, there may exist distinct eigenvalues of $L$ that are maximal for $g$, so that both eigenspaces are optimal, and therefore the span of these eigenspaces are optimal as well, even though most such vectors are not eigenvectors.}
\end{proof}\\

One way to interpret the previous theorem is by first considering the extreme ranges of $tL$. For $t\approx 0$, $tL$ is negligible compared to $I$, and so $(I+tL)^{-1}\approx I$. Physically, this corresponds to each individual not really listening to their neighbors over their own initial opinion, so the opinion dynamics do not substantially change the seeded opinion by the adversary. When this is the case, the above shows that the optimal strategy for the adversary that seeks to induce maximum disagreement is to seed vectors in the direction of $V_n$. In this case, by the quadratic form (\ref{eq:disagreement}), the adversary's strategy is to simply feed in opinions that directly maximize disagreement in $G$, as the opinion dynamics are themselves negligible. The actual vectors in the set $V_n$ attempt to place different values on the two sides of each edge; in that sense, the vectors in $V_n$ can be thought of as solving a type of graph coloring problem \cite{alon1997spectral}. The quantity $\lambda_n(L)$ itself is also known to be closely related to the size of maximum cuts in graphs, see for instance \cite{delorme1993laplacian, goemans1995improved, trevisan2012max}.

On the other extreme, for large $t$, $(I+tL)^{-1}(tL)(I+tL)^{-1}\approx \frac{1}{t} L^{\dagger}$, where $L^{\dagger}$ is the pseudo-inverse of $L$. In this regime, the largest eigenvalue of $L^{\dagger}$ is the inverse of the smallest nonzero eigenvalue of $L$, which is just $\lambda_2$, with corresponding optimizer proportional to $\mathbf{v}_2$. In general, it is well-known that $\lambda_2$ and $\mathbf{v}_2$ are intimately connected to the \emph{normalized sparsest cut} of $G$; if $G$ is $d$-regular, then the famous discrete Cheeger inequality asserts that $\lambda_2$ is an approximation to the normalized sparsest cut to a factor of $\Theta(\sqrt{d})$ \cite{chung1997spectral}. It is also known that some sweep cut of $\mathbf{v}_2$ yields a bipartition that attains this bound. Therefore, in this large $t$ regime where graph neighbors are higher weighted than internal opinions, the initial opinion vector inducing maximal disagreement roughly corresponds to a sparse cut. Because the graph interactions in the Friedkin-Johnsen dynamics are so strong compared to the weight of the internal opinions, the optimal strategy of the adversary is roughly to induce disagreement on along some sparse cut of the graph.

As $t$ varies between these two regimes, the above result implies that \emph{every} nontrivial eigenvector becomes relevant. In the intermediate range of $t$, the relative effects of $tL$ in measuring disagreement and $tL$ in smoothing the initial opinions via the opinion dynamics directly conflict, which causes the adversary to pass through each eigenspace. For different regimes of how the each individual weighs their internal opinion to those of their neighbors, these effects balance in different ways, leading to this more interesting connection between the eigenstructure of $L$ and the adversary's ability to induce large disagreement in a network with these dynamics.

\subsection{Repeated Disagreement}
As a simple extension of one-period disagreement considered in Section \ref{section:disagreement}, it is natural to consider a similar objective taken over a longer time horizon. An analogous multi-period setting is the following: the adversary supplies the initial opinions $\mathbf{s}$, which the Friedkin-Johnsen dynamics take to $(I+L)^{-1}\mathbf{s}$. In the first period, the disagreement of this equilibrium opinion vector is measured, as before. In the next period, the last period final opinions $(I+L)^{-1}\mathbf{s}$ become the new initial opinions, which subsequently get updated by the dynamics to $(I+L)^{-2}\mathbf{s}$; the disagreement of these opinions is then measured and added to the first disagreement. This process is repeated for $T+1$ periods, where $T\in \mathbb{N}\cup \{\infty\}$. 

\jgedit{One natural setting for this objective is the following: consider a setting where opinions solidify, then update on a cyclic basis. For instance, during a U.S. election cycle, opinions might form along network structure according to the FJ dynamics; however, once the elections happen, the final opinions can be viewed as priors, or innate opinions, for the next cycle, at which point the dynamics will take hold once again and so on. These sorts of repeated dynamics are consider by Chitra and Musco \cite{10.1145/3336191.3371825} in their work on understanding filter bubbles in the FJ dynamics.} 

In this multi-period setting, the adversary's problem is to maximize the total disagreement across all time periods: putting this in our framework, this is
\begin{equation*}
    \max_{\mathbf{s}\in \mathbb{R}^n:\|\mathbf{s}\|_2\leq R} \mathbf{s}^T(I+L)^{-1}\bigg(L+(I+L)^{-1}L(I+L)^{-1}+\ldots+(I+L)^{-T}L(I+L)^{-T}\bigg)(I+L)^{-1}\mathbf{s}.
\end{equation*}
To understand this setting, notice that we may work orthogonal to $\mathbf{1}$, as this gives zero objective value. In this case, the relevant $f$ function that acts on the eigenvalues becomes
\begin{equation*}
    f(x)=\sum_{i=0}^{T} \frac{x}{(1+x)^{2i}}=\frac{-(1+x)^{-2T}+(1+x)^2}{2+x},
\end{equation*}
with the convention that the first term is zero when $T=\infty$. We thus find that 
\begin{align*}
    \max_{\mathbf{s}\in \mathbb{R}^n:\|\mathbf{s}\|_2\leq R} \mathbf{s}^T(I+L)^{-1}\bigg(L+(I+L)^{-1}L(I+L)^{-1}+\ldots&+(I+L)^{-T}L(I+L)^{-T}\bigg)(I+L)^{-1}\mathbf{s}\\
    &=R^2\cdot \max_{1<i\leq n}\bigg\{\frac{-(1+\lambda_i(L))^{-2(T+1)}+1}{2+\lambda_i(L)}\bigg\}
\end{align*}
Note that for $T=0$, this reduces to the analysis of the first subsection. 
Rather curiously, for $T=\infty$, this implies that the optimizer of this infinite-period game is precisely $R\cdot V_2$, with corresponding objective value $\frac{R^2}{2+\lambda_2(L)}$. While in the one-shot game considered above, any one of the eigenvectors could be the relevant optimizer, the optimizer of this problem necessarily lies in the direction of $V_2$.

\subsection{Polarization and Disagreement}
For a different objective, Musco, Musco, and Tsourakakis \cite{musco2018minimizing} considered the following cost metric on opinions, which they term the \emph{polarization-disagreement index}, obtained by taking the sum of the disagreement of $\mathbf{x}$ in $G$ and the polarization of $\mathbf{x}$; the authors show that when this measure is done with final opinions $\mathbf{x}=(I+L)^{-1}\mathbf{s}$,  the polarization-disagreement index of the Friedkin-Johnsen dynamics with graph $G$ and initial opinions $\mathbf{s}$ can be simplified to
\begin{equation*}
    \overline{\mathbf{s}}^T(I+L)^{-1}\overline{\mathbf{s}}.
\end{equation*}
Note that this is not quite of the form (\ref{eq:advproblem}), though it is quite similar.

Suppose that an external adversary now chooses the internal opinions $\mathbf{s}$ to maximize the polarization-disagreement index of the final opinions after undergoing the Friedkin-Johnsen dynamics:
\begin{equation*}
    \max_{\mathbf{s}\in \mathbb{R}^n: \|\mathbf{s}\|_2\leq R}\overline{\mathbf{s}}^T(I+L)^{-1}\overline{\mathbf{s}}= \max_{\mathbf{s}\in \mathbb{R}^n: \|\mathbf{s}\|_2\leq R} \mathbf{s}^T(I-\frac{1}{n}\mathbf{1}\mathbf{1}^T)(I+L)^{-1}(I-\frac{1}{n}\mathbf{1}\mathbf{1}^T)\mathbf{s}.
\end{equation*}

\begin{thm}
\label{thm:polarizationdisagreement}
For any graph $G$, the maximizer of the above maximization problem is $\pm R\cdot V_2$, and the resulting objective value is $\frac{R^2}{1+\lambda_2}$.
\end{thm}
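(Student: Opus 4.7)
The plan is to recognize the problem as maximizing a quadratic form over the Euclidean ball, so that the optimal value equals $R^2$ times the largest eigenvalue of the sandwiched matrix
\[
    M := \Bigl(I - \tfrac{1}{n}\mathbf{1}\mathbf{1}^T\Bigr)(I+L)^{-1}\Bigl(I - \tfrac{1}{n}\mathbf{1}\mathbf{1}^T\Bigr),
\]
with optimizers given by $R$ times the top eigenvectors of $M$. This is just the variational characterization used in the proof of Theorem~\ref{thm:dis}.

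The key observation is that the projection $P := I - \tfrac{1}{n}\mathbf{1}\mathbf{1}^T$ and the resolvent $(I+L)^{-1}$ can be simultaneously diagonalized. Since $G$ is connected, $\mathbf{v}_1 = \mathbf{1}/\sqrt{n}$ is an eigenvector of $L$ with eigenvalue $0$, hence also an eigenvector of $(I+L)^{-1}$ with eigenvalue $1$; it is simultaneously the kernel direction of $P$, while the remaining $\mathbf{v}_2,\ldots,\mathbf{v}_n$ span $\mathbf{1}^\perp$ and are fixed by $P$. Thus $M$ has the same orthonormal eigenbasis $\mathbf{v}_1,\ldots,\mathbf{v}_n$, with eigenvalue $0$ on $\mathbf{v}_1$ and eigenvalue $\frac{1}{1+\lambda_i}$ on $\mathbf{v}_i$ for each $i \geq 2$.

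Since the function $x \mapsto \frac{1}{1+x}$ is strictly decreasing on $[0,\infty)$ and $0 = \lambda_1 < \lambda_2 \leq \cdots \leq \lambda_n$, the maximum eigenvalue of $M$ is $\frac{1}{1+\lambda_2}$, attained exactly on the $\lambda_2$-eigenspace of $L$. Scaling back, this yields the claimed value $\frac{R^2}{1+\lambda_2}$ with optimizers $\pm R \cdot V_2$ (where, per the preliminaries, $V_2$ is the set of unit vectors in the $\lambda_2$-eigenspace, so multiplicities in $\lambda_2$ are correctly absorbed).

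There isn't really a serious obstacle here: the whole argument rests on the single structural fact that $\mathbf{1}$ diagonalizes both $P$ and $(I+L)^{-1}$ simultaneously, so the centering projection commutes through the eigendecomposition. The only small point to flag is the uniqueness-up-to-sign claim: one should note that even if $\lambda_2$ has geometric multiplicity larger than one, or if $\lambda_2 = \lambda_3 = \cdots$, the notation $V_2$ already captures the full eigenspace, so the stated optimizer set is correct; this contrasts with the setting of Theorem~\ref{thm:dis}, where $g(x) = x/(1+x)^2$ is not injective and distinct eigenspaces could both be optimal.
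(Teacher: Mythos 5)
Your proof is correct and takes essentially the same route as the paper: both arguments reduce to the variational characterization of eigenvalues via the fact that $\mathbf{1}$ is simultaneously the kernel of the centering projection and the top eigenvector of $(I+L)^{-1}$, so that everything diagonalizes in the eigenbasis of $L$. The paper packages this by first arguing the adversary may restrict to $\mathbf{1}^\perp$ (using $\|\mathbf{s}\|_2\geq\|\overline{\mathbf{s}}\|_2$) and then reading off $\lambda_{n-1}((I+L)^{-1})$, whereas you diagonalize the sandwiched matrix directly; the content is identical.
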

\begin{proof}
First, notice that for any vector $\mathbf{s}$, we have
$\|\mathbf{s}\|_2\geq \|\overline{\mathbf{s}}\|_2$;
this follows from the Pythagorean theorem and decomposing $\mathbf{s}$ into the parts orthogonal to $\mathbf{1}$ (namely, $\overline{\mathbf{s}}$) and the projection onto $\mathbf{1}$. Because the above maximization function only depends on the de-meaned version of $\mathbf{s}$ and is homogenous in $\mathbf{s}$, we may assume the adversary restricts to the subspace orthogonal to $\mathbf{1}$. As such, the problem becomes
\begin{equation*}
    \max_{\mathbf{s}\in \mathbb{R}^n: \|\mathbf{s}\|_2\leq R, \mathbf{s}\perp \mathbf{1}}\mathbf{s}^T(I+L)^{-1}\mathbf{s}.
\end{equation*}
The eigenvalues of $(I+L)^{-1}$ are $\frac{1}{1+\lambda_i(L)}$ for $\lambda_i(L)$ the eigenvalues of $L$; as $\mathbf{1}$ is the eigenvector for the largest eigenvalue $1$ of this matrix, the variational characterization of eigenvalues implies that the above is exactly
\begin{equation*}
    R^2\cdot \lambda_{n-1}((I+L)^{-1})=\frac{R^2}{1+\lambda_2(L)},
\end{equation*}
and this is attained by the set of vectors $R\cdot V_2$, as desired.
\end{proof}\\

In particular, it follows that when an adversary chooses $\mathbf{s}$ to maximize the polarization-disagreement index of $G$ under these dynamics, the only relevant structure of the network that determines its robustness to these adversarial perturbations is precisely determined by the second smallest eigenvalue, and the initial opinion vector inducing this are the corresponding eigenvectors. Under the connection between the second smallest eigenvalue and eigenvectors of $L$ and sparse cuts discussed above, Theorem \ref{thm:polarizationdisagreement} essentially asserts that the ability of an adversary to induce polarization-disagreement in a graph $G$ is essentially determined by the existence of small normalized cuts. Moreover, the actual optimizer for initial opinion vector roughly places large values on one side of a small cut and smaller values on the other side. As an immediate corollary of the above result, we can obtain the following intuitive, but nontrivial fact:
\begin{corollary}
\label{cor:completeoptimal}
Let $\mathcal{L}$ be the set of Laplacians of weighted $n$-node graphs subject to the total edge weight normalization $\text{Tr}(L)=2m$. Then
\begin{equation*}
    \arg\min_{L\in \mathcal{L}}\bigg\{\max_{\mathbf{s}\in \mathbb{R}^n: \|\mathbf{s}\|_2\leq R}\overline{\mathbf{s}}^T(I+L)^{-1}\overline{\mathbf{s}}\bigg\}=\frac{2m}{n(n-1)}\cdot  L_{K_n},
\end{equation*}
where $L_{K_n}$ is the Laplacian of the unweighted simple complete graph $K_n$.\footnote{The factor $\frac{2m}{n(n-1)}$ is just to satisfy the total edge weight condition.}
\end{corollary}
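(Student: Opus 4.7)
The plan is to invoke \Cref{thm:polarizationdisagreement} directly to reduce the minimax problem to a pure spectral question about $\lambda_2(L)$, and then exploit the trace constraint. By the theorem, for any Laplacian $L$, the inner maximum equals $\frac{R^2}{1+\lambda_2(L)}$, which is a decreasing function of $\lambda_2(L)$. Thus the outer minimization over $\mathcal{L}$ is equivalent to the eigenvalue maximization problem
\begin{equation*}
    \max_{L \in \mathcal{L}} \lambda_2(L) \quad \text{subject to} \quad \operatorname{Tr}(L) = 2m.
\end{equation*}

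Next, I would bound $\lambda_2(L)$ from above using the trace constraint. Since every Laplacian has $\lambda_1 = 0$ and $\lambda_2 \leq \lambda_3 \leq \cdots \leq \lambda_n$, the trace condition yields
\begin{equation*}
    2m = \operatorname{Tr}(L) = \sum_{i=2}^n \lambda_i(L) \geq (n-1)\,\lambda_2(L),
\end{equation*}
so $\lambda_2(L) \leq \frac{2m}{n-1}$. Disconnected graphs only make $\lambda_2 = 0$, which is strictly worse for maximization, so they may be ignored.

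It then remains to identify when equality holds. Equality in the inequality above forces $\lambda_2(L) = \lambda_3(L) = \cdots = \lambda_n(L) = \frac{2m}{n-1}$. Combined with $\lambda_1 = 0$ and the eigenvector $\mathbf{v}_1 = \mathbf{1}/\sqrt{n}$, the spectral decomposition gives
\begin{equation*}
    L = \frac{2m}{n-1}\Bigl(I - \tfrac{1}{n}\mathbf{1}\mathbf{1}^T\Bigr).
\end{equation*}
Since the unweighted complete graph satisfies $L_{K_n} = nI - \mathbf{1}\mathbf{1}^T = n\bigl(I - \tfrac{1}{n}\mathbf{1}\mathbf{1}^T\bigr)$, this rearranges to $L = \frac{2m}{n(n-1)} L_{K_n}$, giving both existence and uniqueness of the minimizer.

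The only subtlety is checking that $\frac{2m}{n(n-1)} L_{K_n}$ indeed lies in $\mathcal{L}$, which is immediate since $\operatorname{Tr}\bigl(\tfrac{2m}{n(n-1)} L_{K_n}\bigr) = \tfrac{2m}{n(n-1)} \cdot 2\binom{n}{2} = 2m$, and that the resulting matrix is a valid weighted graph Laplacian with uniform edge weights $\frac{2m}{n(n-1)}$. I do not expect a serious obstacle here; the entire argument is an almost immediate consequence of \Cref{thm:polarizationdisagreement} together with the standard trace-eigenvalue identity for Laplacians.
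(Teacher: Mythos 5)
Your proposal is correct and follows essentially the same route as the paper: reduce via Theorem \ref{thm:polarizationdisagreement} to maximizing $\lambda_2(L)$, apply the trace--eigenvalue inequality $2m \geq (n-1)\lambda_2(L)$, and characterize the equality case as the scaled complete graph. Your explicit spectral-decomposition argument for uniqueness is slightly more detailed than the paper's, which simply asserts the equality case, but the content is the same.
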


\begin{proof}
From the proof of Theorem \ref{thm:polarizationdisagreement}, for any fixed Laplacian $L$, the inner maximization yields the objective value $\frac{R^2}{1+\lambda_2(L)}$; therefore, the claim is equivalent to showing that the Laplacian $\frac{2m}{n(n-1)}\cdot  L_{K_n}$ has the maximum second-smallest eigenvalue among all $L\in \mathcal{L}$.

To see this, observe that for any such $L$, $\lambda_1(L)=0$, and therefore
\begin{equation}
\label{eq:lambda2lb}
    2m = \text{Tr}(L) = \sum_{i=2}^n \lambda_i(L)\geq (n-1)\cdot \lambda_2(L).
\end{equation}
It immediately follows that for any such Laplacian $L$, $\lambda_2(L)\leq \frac{2m}{n-1}$; we will be done if we can show that this is attained for the claimed weighted complete graph. But it is not difficult to check that 
\begin{equation}
    \lambda_2(L_{K_n})=n;
\end{equation}
after scaling by $\frac{2m}{n(n-1)}$ so that the trace condition is satisfied, we see that this upper bound on $\lambda_2$ is exactly attained, proving optimality. Moreover, this occurs if and only if all of the nonzero eigenvalues are $\frac{2m}{n-1}$ by virtue of (\ref{eq:lambda2lb}), which occurs if and only if the graph is the scaled complete graph.
\end{proof}

This corollary thus states that the complete graph, appropriately weighted, is min-max optimal given the adversary's objective to induce maximal polarization-disagreement index when running the Friedkin-Johnsen dynamics. More generally, any \emph{spectral expander} will be robust to these adversarial perturbations, see for instance \cite{hoory2006expander}.

\subsection{Absolute Displacement}
Along these lines, suppose now that the attacker simply seeks to displace opinions maximally from the consensus at $0$, measured in Euclidean norm. This too can be realized in the above setting: suppose that $f(y)\equiv 1$, so that $f(L)=I$. In this case, the adversary solves the following problem:
\begin{equation*}
    \max_{\mathbf{s}\in \mathbb{R}^n: \|\mathbf{s}\|_2\leq R} \mathbf{s}^T(I+L)^{-2}\mathbf{s}=\max_{\mathbf{s}\in \mathbb{R}^n: \|\mathbf{s}\|_2\leq R} \|(I+L)^{-1}\mathbf{s}\|^2_2.
\end{equation*}
This latter identity shows that for this choice of $f$, the adversary's goal is indeed to maximize the $\ell_2$-norm of the final opinion vector $(I+L)^{-1}\mathbf{s}$, or equivalently, to \emph{displace} the final opinions from the initial consensus at $\mathbf{0}$ as much as possible. However, 
\begin{equation*}
    (I+L)^{-2} = \sum_{i=1}^n \frac{1}{(1+\lambda_i)^2} \mathbf{v}_i\mathbf{v}_i^T.
\end{equation*}
As before, we thus obtain
\begin{equation*}
    \max_{\mathbf{s}\in \mathbb{R}^n: \|\mathbf{s}\|_2\leq R} \mathbf{s}^T(I+L)^{-2}\mathbf{s}=R^2 \cdot \lambda_{n}((I+L)^{-2}).
\end{equation*}
But recall that for any graph, the smallest eigenvalue of $L$ is $0$, with corresponding eigenvector $\pm\frac{1}{\sqrt{n}}\cdot\mathbf{1}$ (and this is unique if $G$ is connected); as a result,
\begin{equation*}
    \lambda_{\text{max}}((I+L)^{-2})= 1,
\end{equation*}
and the unique maximizer (up to sign) is $\mathbf{s}=\frac{R}{\sqrt{n}}\mathbf{1}$. In particular, for this optimization problem, the network topology plays no role at all. This observation is quite similar to one made in the context of opinion maximization in \cite{gionis2013opinion}.

\subsection{Other Adversary Constraints}
\jgedit{In the previous sections, we focused on the $\ell_2$-budget constraint, as this most cleanly elucidates the nontrivial interplay between spectral structure and the optimal choices of the adversary for given objectives of discord. In this section, we provide positive results in other settings; the first comes from the independent work of Chen and R\'acz \cite{chen2020network}, who consider the same disagreement objective as we do, but instead makes the adversary sparsity-constrained, as well as restricted to a cube. We first show how a combination of the spectral ideas above with their analysis can lead to a more refined bound on the adversary's power with these constraints.}

\jgedit{We then consider the problem of the adversary choosing weights subject to an $\ell_{\infty}$ constraint, say $\|s\|_{\infty}=1$; it is not difficult to show that the adversary will choose a seed such that each component has absolute value $1$. One can think of this setting as allowing the adversary to partition the agents, and then seed them separately a bounded amount. In this case, one can of course give a trivial spectral bound for each objective. However, using known techniques from combinatorial optimization, we show that it is possible to efficiently obtain an $O(1)$ approximation to the adversary's power in this setting, and moreover, using known algorithms in this setting, obtain an explicit near-optimal vector for the adversary.}

\jgedit{Finally, we consider the problem of an $\ell_1$ constrained adversary. However, in this version of the problem, the problem becomes somewhat trivial, in that the adversary must pick from a fixed set of vectors that themselves are graph-independent; still, we give a spectral argument that nonetheless gives some bound on the adversary's power, that is provably tight for the class of vertex-transitive graphs.}

\subsubsection{Sparsity-Constrained Adversary: Connection to Chen-R\'acz}
\label{sec:racz}
We now explain the setting independently considered by Chen and R\'acz \cite{chen2020network}, as well as show how the combining the techniques considered here with their methods can provide a slight strengthening of their theoretical results. Their setup is the following: first, they allow the network to have nonzero initial starting opinions $\mathbf{s}_0\in [0,1]^n$ and allow the adversary to perturb the initial opinions in $[0,1]^n$, but subject to a $k$-sparsity constraint. That is, in their model, they consider the following optimization problem for the adversary:
\begin{equation}
    \max_{\mathbf{s}\in [0,1]^n: \|\mathbf{s}-\mathbf{s}_0\|_0\leq k} \mathbf{s}^T(I+L)^{-1}L(I+L)^{-1}\mathbf{s}.
\end{equation}

We now show how to combine their analysis with our spectral techniques to obtain slightly sharper bounds than the results in their paper on the amount an adversary can induce discord.
\begin{lem}[Lemma 4.1 of \cite{chen2020network}]
\label{lem:racz1}
For any $\mathbf{s}$ satisfying the above constraints,
\begin{equation}
    \|(I+L)^{-1}(\mathbf{s}-\mathbf{s}_0)\|_1\leq k.
\end{equation}
\end{lem}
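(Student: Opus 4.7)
The plan is to show that $(I+L)^{-1}$ is a doubly stochastic matrix, hence it is a contraction in $\ell_1$, and then combine this with a trivial bound on $\|\mathbf{s}-\mathbf{s}_0\|_1$ coming from the cube and sparsity constraints.

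First, I would verify the row-sum property: since $L\mathbf{1}=0$, we have $(I+L)\mathbf{1}=\mathbf{1}$, so $(I+L)^{-1}\mathbf{1}=\mathbf{1}$. This says every row of $(I+L)^{-1}$ sums to $1$, and since $L$ (hence $(I+L)^{-1}$) is symmetric, every column also sums to $1$. Second, I would argue that $(I+L)^{-1}$ has nonnegative entries. The cleanest route is a Neumann series: factor $I+L=(I+D)(I-(I+D)^{-1}A)$, and note that $(I+D)^{-1}A$ is entrywise nonnegative with maximum row sum $\max_i d_i/(1+d_i)<1$, so its spectral radius is strictly less than $1$. Therefore
\begin{equation*}
    (I+L)^{-1}=\sum_{k=0}^{\infty}\bigl((I+D)^{-1}A\bigr)^{k}(I+D)^{-1}
\end{equation*}
converges termwise and every term is entrywise nonnegative. (An alternative is the absorbing random walk interpretation of the FJ dynamics used in \cite{gionis2013opinion}, in which $(I+L)^{-1}$ appears as a matrix of absorption probabilities on an augmented graph with absorbing twin nodes; this directly exhibits both nonnegativity and row sums equal to $1$.)

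Once $(I+L)^{-1}$ is doubly stochastic and entrywise nonnegative, the $\ell_1$-contraction follows by a standard calculation: for any vector $\mathbf{v}$,
\begin{equation*}
    \bigl\|(I+L)^{-1}\mathbf{v}\bigr\|_1=\sum_i\Bigl|\sum_j (I+L)^{-1}_{ij}\mathbf{v}(j)\Bigr|\leq \sum_j|\mathbf{v}(j)|\sum_i (I+L)^{-1}_{ij}=\|\mathbf{v}\|_1,
\end{equation*}
where the last equality uses that each column of $(I+L)^{-1}$ sums to $1$.

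Finally, I would apply this contraction with $\mathbf{v}=\mathbf{s}-\mathbf{s}_0$. Since $\mathbf{s},\mathbf{s}_0\in[0,1]^n$, each entry of $\mathbf{s}-\mathbf{s}_0$ lies in $[-1,1]$; and the sparsity constraint $\|\mathbf{s}-\mathbf{s}_0\|_0\leq k$ forces at most $k$ of these entries to be nonzero. Hence $\|\mathbf{s}-\mathbf{s}_0\|_1\leq k$, and the contraction gives $\|(I+L)^{-1}(\mathbf{s}-\mathbf{s}_0)\|_1\leq k$.

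The only nontrivial step is establishing that $(I+L)^{-1}$ is entrywise nonnegative; everything else is a one-line computation. The Neumann series argument (or equivalently, the absorbing-walk interpretation) handles this cleanly, so I do not expect any genuine obstruction.
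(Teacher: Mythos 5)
Your proof is correct and follows essentially the same route as the paper's: establish that $(I+L)^{-1}$ is entrywise nonnegative and doubly stochastic, deduce that it is an $\ell_1$-contraction, and then bound $\|\mathbf{s}-\mathbf{s}_0\|_1\leq\|\mathbf{s}-\mathbf{s}_0\|_0\leq k$ from the cube and sparsity constraints. The only difference is that you justify nonnegativity explicitly via the Neumann series factorization $I+L=(I+D)(I-(I+D)^{-1}A)$, whereas the paper simply appeals to the repeated-averaging interpretation of the dynamics; your version supplies a detail the paper leaves implicit.
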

\begin{proof}
It is well-known that matrix norm induced by the $\ell_1$ vector norm is exactly the maximum largest column sum. As $\mathbf{1}$ is a right and left eigenvector of $(I+L)^{-1}$ with eigenvalue $1$, and $(I+L)^{-1}_{ij}=\mathbf{e}_i^T(I+L)^{-1}\mathbf{e}_j\geq 0$ by the repeated-averaging interpretation of the dynamics, this implies that $I+L$ is doubly stochastic, so every absolute column sum is $1$. Thus $\|(I+L)^{-1}(\mathbf{s}-\mathbf{s}_0)\|_1\leq \|\mathbf{s}-\mathbf{s}_0\|_1\leq \|\mathbf{s}-\mathbf{s}_0\|_0\leq k$, where the last fact uses the fact the difference componentwise is at most $1$ and the sparsity constraint.

\end{proof}

\begin{lem}
\label{lem:racz2}
For $\mathbf{s}_0\in [0,1]^n$,
\begin{equation*}
    \|L(I+L)^{-1}\mathbf{s}_0\|_{\infty}\leq d_{max},
\end{equation*}
where $d_{max}$ is the largest degree in $G$.
\end{lem}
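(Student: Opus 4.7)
The plan is to interpret $L(I+L)^{-1}\mathbf{s}_0$ through the physical meaning of the Friedkin-Johnsen dynamics. Writing $\mathbf{z} = (I+L)^{-1}\mathbf{s}_0$ for the FJ equilibrium with initial opinions $\mathbf{s}_0$, the vector we need to bound is simply $L\mathbf{z}$, whose $i$th coordinate admits the standard combinatorial expression
\begin{equation*}
    (L\mathbf{z})(i) = \sum_{j:(i,j) \in E} w(i,j)\bigl(\mathbf{z}(i) - \mathbf{z}(j)\bigr).
\end{equation*}
The statement thus reduces to a uniform bound $|\mathbf{z}(i) - \mathbf{z}(j)| \leq 1$, after which the triangle inequality yields $|(L\mathbf{z})(i)| \leq \sum_{j:(i,j)\in E} w(i,j) = D_{ii} \leq d_{\max}$, and the claim follows by maximizing over $i$.

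To establish the pointwise bound, I would show the stronger statement that $\mathbf{z} \in [0,1]^n$. This reuses the exact structural fact that drove the proof of Lemma \ref{lem:racz1}: the matrix $(I+L)^{-1}$ has non-negative entries by the repeated-averaging interpretation of the FJ dynamics, and is row-stochastic since $L\mathbf{1} = 0$ forces $(I+L)^{-1}\mathbf{1} = \mathbf{1}$. Therefore each coordinate of $\mathbf{z}$ is a convex combination of the entries of $\mathbf{s}_0 \in [0,1]^n$ and so lies in $[0,1]$; the bound $|\mathbf{z}(i) - \mathbf{z}(j)| \leq 1$ for all pairs is then immediate.

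I do not expect any substantial obstacle here, as both the coordinate formula for $L\mathbf{z}$ and the row-stochasticity of $(I+L)^{-1}$ are standard, and the argument essentially repackages the same structural property that underpinned Lemma \ref{lem:racz1}. The one subtlety worth mentioning is that one could alternatively use $L(I+L)^{-1}\mathbf{s}_0 = \mathbf{s}_0 - \mathbf{z}$ to obtain the uniform bound $1$, which is tighter whenever $d_{\max} \geq 1$; but the degree-dependent form stated here is the useful version to pair with Lemma \ref{lem:racz1} in any downstream analysis of the adversary's discord objective, since the two bounds combine via H\"older to control the quadratic form directly in terms of $k$ and $d_{\max}$.
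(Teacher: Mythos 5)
Your proof is correct and follows essentially the same route as the paper's: both arguments hinge on the fact that $\mathbf{z}=(I+L)^{-1}\mathbf{s}_0\in[0,1]^n$ because $(I+L)^{-1}$ is nonnegative and stochastic, and then bound $(L\mathbf{z})(i)$ coordinatewise by the degree $d_i\leq d_{\max}$. The only cosmetic difference is that you organize the final estimate as a sum of edgewise differences $w(i,j)(\mathbf{z}(i)-\mathbf{z}(j))$ each bounded by $w(i,j)$, whereas the paper separates the diagonal and off-diagonal contributions and bounds each by $d_i$; both are immediate given $\mathbf{z}\in[0,1]^n$.
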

\begin{proof}
Note that $\mathbf{z}_0:=(I+L)^{-1}\mathbf{s}_0\in [0,1]^n$ as we have seen that $(I+L)^{-1}$ is doubly stochastic, so is a positive contraction in $\ell_{\infty}$ (this also follows from the repeated-averaging interpretation). Then
\begin{equation*}
    \vert (L\mathbf{z}_0)_i\vert=\bigg\vert \sum_{j=1}^n L_{i,j}(\mathbf{z}_0)_j\bigg\vert= \bigg\vert d_{i}(\mathbf{z}_0)_i-\sum_{j\neq i} A_{ij} (\mathbf{z}_0)_j\bigg\vert \leq \max\{d_{i}(\mathbf{z}_0)_i,d_i\max_{j}(\mathbf{z}_0)_j\}\leq d_{\max}.
\end{equation*}
\end{proof}

From these lemmas, we can now combine the ideas in \cite{chen2020network} with the spectral approach of this work to obtain a sharpened bound in this setting when $k=\Omega(\sqrt{n})$. For convenience, we will write $(I+L)^{-1}L(I+L)^{-1}$ as $\Sigma$.
\begin{thm}[Theorem 1.3 of \cite{chen2020network}]
For any graph $G$, the amount of disagreement an adversary can add is bounded by
\begin{equation}
    \lambda_{max}(\Sigma)k+\sqrt{k}\min\{2 d_{max}\sqrt{k},2\cdot \lambda_{max}(\Sigma)\sqrt{n}\}.
\end{equation}
\end{thm}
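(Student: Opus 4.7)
The plan is to write the added disagreement as a quadratic form in the perturbation $\delta:=\mathbf{s}-\mathbf{s}_0$ plus a cross term, bound each separately, and obtain the two summands in the claimed inequality as well as the two options in the $\min$ via two different ways of controlling the cross term.

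\textbf{Step 1 (setup).} From the constraints $\mathbf{s},\mathbf{s}_0 \in [0,1]^n$ and $\|\mathbf{s}-\mathbf{s}_0\|_0 \leq k$, I would first observe that $\delta$ has at most $k$ nonzero coordinates, each of absolute value at most $1$, so $\|\delta\|_2^2 \leq k$. Also $\|\mathbf{s}_0\|_2^2 \leq n$. Expanding the quadratic form yields
$$\mathbf{s}^T \Sigma \mathbf{s} - \mathbf{s}_0^T \Sigma \mathbf{s}_0 \;=\; 2\mathbf{s}_0^T \Sigma \delta \;+\; \delta^T \Sigma \delta.$$

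\textbf{Step 2 (diagonal term).} Apply the operator-norm bound to the quadratic piece: $\delta^T \Sigma \delta \leq \lambda_{\max}(\Sigma)\|\delta\|_2^2 \leq \lambda_{\max}(\Sigma)\,k$. This accounts for the first summand $\lambda_{\max}(\Sigma)k$ in the claimed inequality.

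\textbf{Step 3 (cross term, two bounds).} For the cross term, I would produce two bounds and keep the smaller. \emph{Bound A} (matching $2 d_{\max}\sqrt{k}$): using the symmetry of $(I+L)^{-1}$ and $L$, split the symmetric form asymmetrically as
$$\mathbf{s}_0^T \Sigma \delta \;=\; \bigl\langle L(I+L)^{-1}\mathbf{s}_0,\; (I+L)^{-1}\delta\bigr\rangle,$$
and apply the $\ell_\infty$-$\ell_1$ Hölder inequality. Lemma \ref{lem:racz2} bounds the left factor in $\ell_\infty$ by $d_{\max}$, and Lemma \ref{lem:racz1} bounds the right factor in $\ell_1$ by $k$, so $|\mathbf{s}_0^T \Sigma \delta| \leq d_{\max} k$, which doubles to $\sqrt{k}\cdot 2 d_{\max}\sqrt{k}$. \emph{Bound B} (matching $2\lambda_{\max}(\Sigma)\sqrt{n}$): apply Cauchy-Schwarz in the positive semidefinite form $\Sigma$, followed by the operator-norm bound on each factor:
$$|\mathbf{s}_0^T \Sigma \delta| \;\leq\; \sqrt{\mathbf{s}_0^T \Sigma \mathbf{s}_0}\,\sqrt{\delta^T \Sigma \delta}\;\leq\;\lambda_{\max}(\Sigma)\|\mathbf{s}_0\|_2\|\delta\|_2 \;\leq\; \lambda_{\max}(\Sigma)\sqrt{nk},$$
which doubles to $\sqrt{k}\cdot 2\lambda_{\max}(\Sigma)\sqrt{n}$. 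Taking the minimum of the two bounds on the cross term and adding the diagonal bound from Step~2 gives exactly the stated inequality.

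\textbf{Main obstacle.} The only non-mechanical move is Step~3A: recognizing that one should \emph{break} the symmetric form $\mathbf{s}_0^T \Sigma \delta$ into a non-symmetric bilinear pairing precisely so that Lemmas \ref{lem:racz1} and \ref{lem:racz2} can be fed into the two sides of Hölder. Once this pairing is spotted, everything else is just bookkeeping: the hard analytic content is already packaged in those two lemmas (the double-stochasticity of $(I+L)^{-1}$ and the $\ell_\infty$ contraction estimate on $L(I+L)^{-1}$), and Bound B is a completely standard spectral Cauchy-Schwarz. Note also that the regime in which Bound A improves on Bound B is $k \lesssim n \lambda_{\max}(\Sigma)^2 / d_{\max}^2$, which is essentially the $k = \Omega(\sqrt{n})$ regime flagged in the paragraph preceding the theorem.
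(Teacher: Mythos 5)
Your proposal is correct and follows essentially the same route as the paper: the same decomposition into quadratic and cross terms, the same operator-norm bound on the quadratic term, and the same two bounds on the cross term (H\"older via Lemmas \ref{lem:racz1} and \ref{lem:racz2}, and Cauchy--Schwarz). The only cosmetic difference is that you apply Cauchy--Schwarz in the $\Sigma$-inner product rather than the Euclidean one in Bound B, which yields the identical estimate.
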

\begin{proof}
The difference between the disagreement of $\mathbf{s}$ and $\mathbf{s}_0$ after running the dynamics is easily seen to be
\begin{equation}
    2(\Delta \mathbf{s})^T(I+L)^{-1}L(I+L)^{-1}\mathbf{s}_0+(\Delta \mathbf{s})^T(I+L)^{-1}L(I+L)^{-1}(\Delta \mathbf{s}),
\end{equation}
where $\Delta \mathbf{s}=\mathbf{s}-\mathbf{s}_0$. Applying Cauchy-Schwarz to the last term with the result in Theorem \ref{thm:dis} gives an upper bound of
\begin{equation}
   \|(I+L)^{-1}L(I+L)^{-1}\|\|\Delta \mathbf{s}\|_2^2\leq \lambda_{max}(\Sigma)k.
\end{equation}
The first term can be bounded in a couple of ways: using H\"older's inequality and Lemma \ref{lem:racz1} and \ref{lem:racz2}, one obtains
\begin{equation}
    2(\Delta \mathbf{s})^T(I+L)^{-1}L(I+L)^{-1}\mathbf{s}_0\leq 2\|(I+L)^{-1} (\Delta \mathbf{s})\|_1 \|L(I+L)^{-1}\mathbf{s}_0\|_{\infty}\leq 2kd_{\max}.
\end{equation}

Another way to bound this first term is simply using Cauchy-Schwarz and Theorem \ref{thm:dis} to get
\begin{equation}
    2(\Delta \mathbf{s})^T(I+L)^{-1}L(I+L)^{-1}\mathbf{s}_0\leq 2\|(\Delta \mathbf{s})\|_2 \|(I+L)^{-1}L(I+L)^{-1}\mathbf{s}_0\|_{2}\leq 2\cdot \lambda_{max}(\Sigma)\sqrt{k}\|\mathbf{s}_0\|_2\leq 2\cdot \lambda_{max}(\Sigma)\sqrt{nk}.
\end{equation}
Therefore, we conclude that the increase in  disagreement is bounded by
\begin{equation}
    \lambda_{max}(\Sigma)k+\sqrt{k}\min\{2 d_{max}\sqrt{k},2\cdot \lambda_{max}(\Sigma)\sqrt{n}\}.
\end{equation}
\end{proof}

In \cite{chen2020network}, the authors achieve a bound of $8d_{max}k$. From this combined result here, coupled with the previous sections, one obtains an improvement of the constants and dependence on $d_{max}$; moreover, this shows that the dependence on $d_{max}$ is unnecessary for $d_{max}=\Omega(\sqrt{n/k})$. In fact, though the bound attained here seems to be worse in the setting where $d_{max}\to 0$, this is not so; it is trivial to see that $\frac{x}{(1+x)^2}\leq x$ for $x\geq 0$, so that $\lambda_{max}(\Sigma)\leq \lambda_{max}(L)$. By the Gershgorin circle theorem, one also has $\lambda_{max}(L)\leq 2d_{max}$, thus giving an improvement in the small $d_{max}$ setting as well.

We note that the linear dependence on $k$ is indeed necessary in general.\footnote{By this, we just mean that there \emph{exists} instances of $L$ and $\mathbf{s}_0$ where the dependence in $k$ is provably linear. It is easy to see that for any $L$, there exists some $\mathbf{s}_0$ where the dependence on $k$ is trivial. For any $L$, let $\mathbf{s}_0$ be a vector that maximizes disagreement in the cube $[0,1]^n$ (as the maximum of a convex function, it will lie in $\{0,1\}^n$, but that is not used in the argument). Then no perturbation inside the cube can improve it, hence the dependence on $k$ is trivial. It would be interesting, as noted in \cite{chen2020network}, if one could show that the dependence on $k$ is linear for ``typical'' instances, as suggested by their empirical results.} For instance, consider the family of graphs $L_n := \frac{1}{n} L_{K_n}$. All nontrivial eigenvalues are located at $\lambda = 1$, hence are a worst-case example for disagreement. Consider the vectors $\mathbf{s}^k$ obtained by setting the first $k$ entries to $1$, with the rest $0$. It is easy to see that $\mathbf{s}^k$ can be decomposed as
\begin{equation}
    \mathbf{s}^k = \frac{k}{\sqrt{n}}\frac{\mathbf{1}}{\sqrt{n}}+\bigg(\sqrt{\frac{k(n-k)}{n}}\bigg)\mathbf{r}^k,
\end{equation}
where $\mathbf{r}^k$ is a unit vector orthogonal to $\mathbf{1}$, hence an eigenvector of $\Sigma$ with eigenvalue $1/4$. Thus, if we instantiate this sparsity-constrained problem with $\mathbf{s}_0=\mathbf{0}$, we have
\begin{equation}
    \mathbf{s}^k\Sigma\mathbf{s}^k = \frac{k(n-k)}{4n}=(1-k/n)\cdot\frac{k}{4}.
\end{equation}
In particular, for $k<n/2$ for instance, this is $\Omega(k)$.

\subsubsection{$\ell_{\infty}$-Constrained Adversary}

Now consider the problem of the adversary choosing weights subject to an $\ell_{\infty}$ constraint; for convenience, we will write the disagreement problem\footnote{The corresponding results for any other positive semidefinite objective, like the others we considered before, follow in exactly the same way.} as
\begin{equation}
    \max_{\mathbf{s}\in \mathbb{R}^n:\|\mathbf{s}\|_{\infty} = 1} \mathbf{s}^T(I+L)^{-1}L(I+L)^{-1}\mathbf{s}.
\end{equation}
For simplicity, again write the matrix as $\Sigma$. The first simple observation is that the optimal adversary strategy $\mathbf{s}^*$ will satisfy $\vert \mathbf{s}^*_i\vert=1$ for all $i$; indeed, if one fixes $\mathbf{s}^*_{-i}$, then the objective is easily seen to be linear in $\mathbf{s}_i$, and therefore attains a maximum for $\mathbf{s}^*_i\in \{-1,1\}$. In particular, the problem can be rewritten as
\begin{equation}
\label{eq:advinf}
    \max_{\mathbf{s}\in \mathbb{R}^n: \forall i\,\mathbf{s}_i \in  \{\pm1\}} \mathbf{s}^T\Sigma \mathbf{s}.
\end{equation}
In general, there does not appear to be any analytic solution to this problem.\footnote{As we show in Remark \ref{rmk:nphard} in Section \ref{section:mixedgraph}, the generalization of this objective to the mixed-graph setting is actually easily seen to be $\mathsf{NP}$-hard.} One always has the trivial spectral bound $\lambda_n(\Sigma)\cdot n$, just using the fact that the $\ell_2$-norm of the candidate solutions are all $\sqrt{n}$. However, by leveraging deep results from functional analysis that have long been fruitfully applied to combinatorial optimization, it is possible to give an efficient constant factor approximation:
\begin{thm}
There exists a polynomial-time algorithm that computes a $\pi/2$-approximation to (\ref{eq:advinf}); moreover, a vector $s' \in \{-1,1\}^n$ attaining this bound can be obtained efficiently.\footnote{We ignore issues of bit complexity; if desired, one may assume that $L$ is given as a rational matrix and that we will be content with solutions up to $\epsilon$ accuracy.}
\end{thm}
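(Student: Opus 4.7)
The plan is to invoke Nesterov's classical $\pi/2$-approximation theorem for maximizing a positive semidefinite quadratic form over the Boolean hypercube, which applies directly here because $\Sigma = (I+L)^{-1} L (I+L)^{-1} \succeq 0$. First I would lift the problem to its natural semidefinite programming relaxation: introduce a matrix variable $X \in \mathbb{R}^{n \times n}$ subject to $X \succeq 0$ and $X_{ii} = 1$ for all $i$, and consider $\max \operatorname{Tr}(\Sigma X)$. Any feasible $\mathbf{s} \in \{\pm 1\}^n$ corresponds to the rank-one matrix $X = \mathbf{s}\mathbf{s}^T$, so this SDP upper-bounds (\ref{eq:advinf}). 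The SDP can be solved to arbitrary accuracy in polynomial time by standard interior point methods.

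Next I would round via the Goemans--Williamson hyperplane scheme: Cholesky-factor the SDP optimum as $X = V^T V$, producing unit vectors $\mathbf{v}_1, \dots, \mathbf{v}_n$, draw a Gaussian direction $\mathbf{r} \sim \mathcal{N}(0, I_n)$, and set $s'_i = \operatorname{sign}(\mathbf{v}_i^T \mathbf{r})$. A standard computation gives $\mathbb{E}[s'_i s'_j] = \frac{2}{\pi} \arcsin(\mathbf{v}_i^T \mathbf{v}_j) = \frac{2}{\pi}\arcsin(X_{ij})$, so
\begin{equation*}
    \mathbb{E}[(\mathbf{s}')^T \Sigma \mathbf{s}'] = \frac{2}{\pi} \sum_{i,j} \Sigma_{ij} \arcsin(X_{ij}).
\end{equation*}
The heart of the argument is Nesterov's observation that when $\Sigma \succeq 0$, this sum is at least $\operatorname{Tr}(\Sigma X)$. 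This follows from the Taylor expansion $\arcsin(y) = \sum_{k \geq 0} c_k y^{2k+1}$ with $c_k \geq 0$ and $c_0 = 1$, together with the Schur product theorem: since $X \succeq 0$, every Hadamard power $X^{\circ (2k+1)}$ is PSD, so $\operatorname{Tr}(\Sigma X^{\circ(2k+1)}) \geq 0$, and summing with the nonnegative weights $c_k$ yields $\sum_{i,j} \Sigma_{ij} \arcsin(X_{ij}) \geq \operatorname{Tr}(\Sigma X)$.

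Combining the two bounds, $\mathbb{E}[(\mathbf{s}')^T \Sigma \mathbf{s}'] \geq \frac{2}{\pi} \operatorname{Tr}(\Sigma X) \geq \frac{2}{\pi} \cdot \operatorname{OPT}$ of (\ref{eq:advinf}). Hence with positive probability the rounded vector $\mathbf{s}' \in \{\pm 1\}^n$ achieves a $\pi/2$-approximation; to produce an explicit vector, I would derandomize via the method of conditional expectations (fixing coordinates one at a time so as not to decrease the conditional expectation of $(\mathbf{s}')^T \Sigma \mathbf{s}'$), which is efficient since each conditional expectation is computable from Gaussian orthant probabilities in closed form, or alternately by repeated sampling and Markov's inequality. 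The main technical obstacle is really the Schur-power inequality in the previous paragraph, but this is precisely Nesterov's theorem and is clean once framed through the Schur product theorem; everything else is routine SDP machinery.
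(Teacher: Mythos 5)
Your proposal is correct and follows essentially the same route as the paper: both pass to the standard SDP relaxation with $X\succeq 0$, $X_{ii}=1$, invoke the $\pi/2$ constant for maximizing a PSD quadratic form over $\{\pm 1\}^n$ (the paper cites this as Grothendieck's inequality via Alon--Naor; your Nesterov formulation is the same statement), and extract the Boolean vector by Gaussian hyperplane rounding. The only difference is that you prove the key $\arcsin$/Schur-product inequality from scratch where the paper cites it as a black box, which is a fine (and correct) elaboration rather than a different argument.
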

\begin{proof}
To prove the theorem, we apply \emph{Grothendieck's inequality} \cite{alon2004approximating}, which asserts for any positive semidefinite matrix $Z$,
\begin{equation}
    \max_{\mathbf{x}\in \mathbb{R}^n: \forall i\,x_i \in  \{\pm1\}} \mathbf{x}^TZ \mathbf{x}\leq \max_{\mathbf{x}_1,\ldots,\mathbf{x}_n\in S^{n-1}} \sum_{i,j=1}^n Z_{ij}\langle \mathbf{x}_i,\mathbf{x}_j\rangle\leq \frac{\pi}{2} \max_{\mathbf{x}\in \mathbb{R}^n: \forall i\,x_i \in  \{\pm1\}} \mathbf{x}^TZ \mathbf{x},
\end{equation}
where $S^{n-1}=\{\mathbf{x}\in \mathbb{R}^n: \|\mathbf{x}\|_2=1\}$. It is well-known that the middle expression is easily attained as the solution to the semidefinite program
\begin{align*}
\max_{X\in \mathbb{R}^{n\times n}} &\text{Tr}(ZX)\\
    \text{subject to } X_{ii} &= 1\\
    X&\succeq 0,
\end{align*}
and therefore can be solved to arbitrary accuracy in polynomial time. By setting $Z$ to be our matrix $\Sigma$ and applying this result, it follows that the solution to this problem thus gives a $\frac{\pi}{2}$-approximation to the adversary's problem. We remark that, following the approach for the $\mathsf{MAXCUT}$ problem in \cite{delorme1993laplacian}, one can appeal to strong SDP duality to obtain an equivalent, and slightly more analytic, spectral upper bound of $n\cdot \min_{\mathbf{x}:\mathbf{x}^T\mathbf{1}=0} \lambda_{max}(\Sigma-\text{diag}(\mathbf{x}))$. This clearly improves on the trivial spectral bound given before, though it is not clear in general what good test vectors $\mathbf{x}$ are to give an explicit better bound.

To actually obtain a $\pm 1$ vector attaining this bound compared to the optimum of the SDP, Alon and Naor \cite{alon2004approximating} show that randomized hyperplane rounding applied to the vectors attaining the SDP optimum can be used to obtain a vector $\mathbf{s}$ that is at least $\frac{2}{\pi}$ of the SDP optimum in expectation, hence of the original problem .
\end{proof}

\subsubsection{$\ell_1$-Constrained Adversary}
One may also consider an adversary that is bounded in $\ell_1$, which may be viewed as a more natural restriction. However, in this case, the problem becomes trivial; again, we consider the disagreement problem with matrix $\Sigma$, but all these results carry over for any other positive semidefinite objective.
\begin{thm}
Suppose the adversary is now $\ell_1$-constrained, so that the optimization problem is
\begin{equation}
    \max_{\mathbf{s}\in \mathbb{R}^n: \|\mathbf{s}\|_1=1} \mathbf{s}^T\Sigma \mathbf{s}.
\end{equation}
Then, an optimal adversary strategy is simply to set $s=\mathbf{e}_i$ where $i\in \arg\max_{j\in [n]} \Sigma_{jj}$ (that is, to put all their budget on an index with largest diagonal term in $\Sigma$).

In this case, denoting the adversary's power as $D$, we have
\begin{equation}
    \frac{1}{n}\sum_{i=1}^n \frac{\lambda_i(L)}{(1+\lambda_i(L))^2}\leq D\leq \max_{i\in [n]} \frac{\lambda_i(L)}{(1+\lambda_i(L))^2}\leq \frac{1}{4},
\end{equation}
where the lower bound is sharp in the case of vertex-transitive graphs.
\end{thm}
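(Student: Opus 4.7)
The plan is to reduce the optimization to maximizing a diagonal entry of $\Sigma$ via convexity, and then sandwich the resulting scalar by the trace-average and the top eigenvalue, both of which we already have clean spectral formulas for from Theorem \ref{thm:dis}.

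First I would observe that $\mathbf{s}\mapsto \mathbf{s}^T \Sigma \mathbf{s}$ is a convex, degree-two homogeneous function (since $\Sigma\succeq 0$), so replacing the constraint $\|\mathbf{s}\|_1=1$ by $\|\mathbf{s}\|_1\leq 1$ does not change the optimum: any feasible $\mathbf{s}$ with $\|\mathbf{s}\|_1<1$ can be rescaled to strictly increase the objective whenever it is positive. The new feasible set is the cross-polytope, whose extreme points are precisely $\{\pm \mathbf{e}_i\}_{i\in [n]}$, and a convex function attains its maximum over a polytope at a vertex. Since $(\pm\mathbf{e}_i)^T \Sigma (\pm\mathbf{e}_i)=\Sigma_{ii}$, this immediately gives the claimed optimizer $\mathbf{e}_{i^\star}$ for any $i^\star \in \arg\max_j \Sigma_{jj}$ and the identity $D=\max_i \Sigma_{ii}$.

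The bounds then follow from elementary inequalities on the diagonal of $\Sigma$. The lower bound is just $\max_i \Sigma_{ii}\geq \tfrac{1}{n}\sum_i \Sigma_{ii}=\tfrac{1}{n}\mathrm{Tr}(\Sigma)$, and the spectral diagonalization of $\Sigma=(I+L)^{-1}L(I+L)^{-1}$ recorded in the proof of Theorem \ref{thm:dis} gives $\mathrm{Tr}(\Sigma)=\sum_i \tfrac{\lambda_i(L)}{(1+\lambda_i(L))^2}$. The upper bound is $\Sigma_{ii}=\mathbf{e}_i^T \Sigma \mathbf{e}_i\leq \lambda_{\max}(\Sigma)=\max_i \tfrac{\lambda_i(L)}{(1+\lambda_i(L))^2}$, which is at most $\tfrac{1}{4}$ by the same theorem.

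Finally, for sharpness in the vertex-transitive case, I would use that any graph automorphism of $G$ is represented by a permutation matrix $P$ satisfying $PLP^T=L$, and hence also $P\Sigma P^T=\Sigma$. If $G$ is vertex-transitive, then for any two vertices $u,v$ we may choose such a $P$ with $P\mathbf{e}_u=\mathbf{e}_v$, and then $\Sigma_{vv}=(P\mathbf{e}_u)^T \Sigma (P\mathbf{e}_u)=\mathbf{e}_u^T(P^T \Sigma P)\mathbf{e}_u=\Sigma_{uu}$. Thus all diagonal entries of $\Sigma$ coincide, forcing $\max_i \Sigma_{ii}=\tfrac{1}{n}\mathrm{Tr}(\Sigma)$ and collapsing the two bounds. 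There is no real obstacle: the only step requiring even mild care is the symmetry argument, which is just the standard fact that graph automorphisms act as permutations commuting with $L$.
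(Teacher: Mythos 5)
Your proposal is correct and follows essentially the same route as the paper's proof: reduce to the extreme points $\{\pm\mathbf{e}_i\}$ of the $\ell_1$-ball by convexity, bound $\max_i \Sigma_{ii}$ below by the trace average and above by $\lambda_{\max}(\Sigma)$, and use graph automorphisms to force equal diagonal entries in the vertex-transitive case. The only cosmetic differences are that you justify passing from $\|\mathbf{s}\|_1=1$ to $\|\mathbf{s}\|_1\leq 1$ explicitly and phrase the upper bound via $\lambda_{\max}(\Sigma)$ rather than via containment of the $\ell_1$ ball in the $\ell_2$ ball, which are equivalent.
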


\begin{proof}
For an optimal adversary strategy, simply note that the $\ell_1$-ball is the convex hull of the set $\{\pm \mathbf{e}_1,\ldots,\pm \mathbf{e}_n\}$. As the maximum of a convex function over a convex set is attained at an extreme point, it suffices to consider this set of strategies; by homogeneity, it suffices to just consider $\{\mathbf{e}_1,\ldots,\mathbf{e}_n\}$. But when substituting these terms in the optimization problem, one then recovers the diagonal elements of $\Sigma$, so the adversary may simply choose the largest such term.

The upper bound follows from the $\ell_2$ bound we gave before, using the fact that the $\ell_1$ unit ball is contained in the $\ell_2$ unit ball. For the lower bound, observe that
\begin{equation}
    \sum_{i=1}^n \mathbf{e}_i^T\Sigma \mathbf{e}_i=\sum_{i=1}^n \Sigma_{ii}=\text{Tr}(\Sigma).
\end{equation}
As the trace of any matrix is equal to the sum of the eigenvalues, it follows there must exist a diagonal element that is at least the average of the eigenvalues. 

To see the tightness of the lower bound, consider any vertex-transitive graph. By definition, the adjacency matrix is unchanged under the action of a transitive subgroup of the symmetric group. Any vertex-transitive graph is regular, so the degree matrix is a multiple of the identity, hence also unchanged under the action of any permutation; together, these imply that the Laplacian of the graph must be invariant under the action of a transitive subgroup of the symmetric group. As a result, any rational expression of the Laplacian is also invariant. By applying suitable automorphisms of the graph, this implies that any two diagonal elements of any rational expression of the Laplacian must be equal, in which case they are all equal to the average of the eigenvalues by the trace identity.
\end{proof}

\section{Defending the Network}
\label{section:defense}
In this section, we consider the problem of \emph{defending} a given network from adversarial perturbations like those considered above. We will view this as a two-player min-max game; first, a network defender will choose how to set some qualitative feature of the network subject to normalization constraints modeling the resource limitation of the defender. Then, the adversary performs the above maximization problem with this choice of settings. The goal of the defender is to choose 
a setting to minimize the cost of the resulting system (e.g., the measure of disagreement), knowing that the adversary will optimize for this choice. In this section, we show that, in one such formulation, the defender can efficiently do this via solving an appropriate convex optimization problem.

We have generally adopted the convention that the network topology is basically fixed; it is unrealistic to substantively change a real-world network structure. Therefore, in this formulation, the network defender chooses how to vary the cost of the adversary in changing initial opinions of different nodes. That is, the network defender can choose to weigh each node differently for example, by lessening their exposure to misinformation, so that the adversary pays different costs for perturbing different nodes.

Formally, we consider the following problem: suppose the network defender is resource limited according to a function $h:\mathbb{R}^n\to \mathbb{R}$, such as the $\ell_1$-norm, and is permitted to change node-weights with the restriction of $h(\mathbf{w})=h(\mathbf{1})$. We will consider, under this resource constraint, what the defender's optimal choice of $\mathbf{w}$ is. More generally, we will assume that $h$ is nonnegative, convex, and radially increasing and homogeneous (i.e. for $\alpha\geq 0$, $h(\alpha\mathbf{x})=g(\alpha)h(\mathbf{x})$, with $g:\mathbb{R}_+\to \mathbb{R}_+$ an increasing function), as well as a function $f:\mathbb{R}\to \mathbb{R}$ as in (\ref{eq:advproblem}) that induces a positive semidefinite quadratic form. Then the network defender must solve the following optimization problem:
\begin{equation}
\label{eq:defenderproblem}
    \min_{\mathbf{w}\in \mathbb{R}^n_{>0}: h(\mathbf{w})=h(\mathbf{1})}\bigg\{\max_{\mathbf{s}\in \mathbb{R}^n: \|\mathbf{s}\|_{\mathbf{w}}\leq R} \mathbf{s}^T (I+L)^{-1} f(L)(I+L)^{-1} \mathbf{s}\bigg\}.
\end{equation}
In words, the network defender chooses a \emph{weighted} $\ell_2$-norm on the nodes of the network under the resource constraint modeled by $h$ that specifies costs of influencing each individual in the network heterogeneously; with these weights and the same fixed budget $R$ as before, the adversary then optimizes their objective. For instance, if $h$ is the $\ell_1$-norm, then this normalization imposes that $\sum_{i=1}^n w_i=\sum_{i=1}^n 1=n$, so that the sum of weights on the nodes for the adversary is the same as for the regular $\ell_2$ norm. Other natural choices for $h$ include any norm on $\mathbb{R}^n$ or any sum of squares of linear expressions. As a result, $\mathbf{w}=\mathbf{1}$ is a valid choice of the network defender, in which case the inner maximization corresponds to the largest eigenvalue of the relevant quadratic form as we have seen above. However, for other choices of $\mathbf{w}$, the inner maximization does not have the same interpretation and moreover, will not usually admit a clean analytical expression as the maximization of a convex objective.

Here, we show that despite this difficulty, this 
can be reduced to convex optimization via a geometric argument. For convenience, set $R=1$; this is without loss of generality as the inner maximization is homogeneous. Consider the following procedure:
\begin{enumerate}
    \item Solve the following convex program with positive semidefinite constraints:
    \begin{gather*}
        \min_W h(\text{diag}(W))\\
        \text{subject to } 0\preceq (I+L)^{-1} f(L) (I+L)^{-1}\preceq W\\
        W_{ij} = 0, \quad \forall i\neq j.
    \end{gather*}
    
    \item Set $\mathbf{w}'=\text{diag}(W^*)$, where $W^*$ is a solution to the above convex program.
    
    \item Let $t\geq 0$ be such that $h(t\mathbf{w}')=h(\mathbf{1})$.
    
    \item Set $\mathbf{w}^*=t\mathbf{w}'$.
\end{enumerate}

We now show that this procedure gives the optimal setting of $\mathbf{w}$.
\begin{thm}
Under the restrictions on $h$ and $f$, the above algorithm yields the optimal value of the problem given by (\ref{eq:defenderproblem}).
\end{thm}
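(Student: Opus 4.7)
The plan is to reduce the defender's min-max problem to an equivalent convex program by combining two classical manipulations---the variational characterization of generalized eigenvalues and the positive semidefinite order---with the homogeneity of $h$. As a first step, I would fix a candidate weight vector $\mathbf{w}\in\mathbb{R}^n_{>0}$ and rewrite the inner maximization: setting $M=(I+L)^{-1}f(L)(I+L)^{-1}$ and $D_{\mathbf{w}}=\text{diag}(\mathbf{w})$, the substitution $\mathbf{t}=D_{\mathbf{w}}^{1/2}\mathbf{s}$ turns $\|\mathbf{s}\|_{\mathbf{w}}\le 1$ into $\|\mathbf{t}\|_2\le 1$ (taking $R=1$ by homogeneity of the outer maximum), yielding
\[
\max_{\|\mathbf{s}\|_{\mathbf{w}}\le 1}\mathbf{s}^T M\mathbf{s}=\lambda_{\max}\bigl(D_{\mathbf{w}}^{-1/2} M D_{\mathbf{w}}^{-1/2}\bigr).
\]
So the defender's problem becomes minimizing this generalized maximum eigenvalue subject to $h(\mathbf{w})=h(\mathbf{1})$.

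Next, I would invoke the PSD-order equivalence $\lambda_{\max}(D_{\mathbf{w}}^{-1/2} M D_{\mathbf{w}}^{-1/2})\le c$ if and only if $M\preceq c\,D_{\mathbf{w}}$ to rewrite the defender's problem as minimizing $c\ge 0$ subject to $h(\mathbf{w})=h(\mathbf{1})$ and $M\preceq c\,D_{\mathbf{w}}$. The key move is the change of variables $\mathbf{v}=c\mathbf{w}$: the matrix constraint becomes $M\preceq D_{\mathbf{v}}$, and $h$ scales as $h(\mathbf{v})=g(c)h(\mathbf{1})$. The functional equation $h(\alpha\beta\mathbf{x})=g(\alpha)g(\beta)h(\mathbf{x})=g(\alpha\beta)h(\mathbf{x})$ forces $g$ to be multiplicative and, combined with strict monotonicity, a power function. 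Hence minimizing $c$ over the defender's constraints is equivalent to minimizing $h(\mathbf{v})$ over $\mathbf{v}$ with $M\preceq D_{\mathbf{v}}$, which is exactly step~1 of the algorithm (with $W=D_{\mathbf{v}}$ encoded by $W_{ij}=0$ for $i\neq j$). The program is convex---convex objective $h$, a linear matrix inequality, and linear equalities---hence efficiently solvable.

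Finally, I would recover the optimal defender weights from the convex-program optimum $\mathbf{w}'=\text{diag}(W^*)$. The reparametrization provides a bijection between feasible pairs $(\mathbf{w},c)$ for the defender and feasible $\mathbf{v}$ for the convex program: if $c^*$ is the defender's optimum with minimizer $\mathbf{w}^*$, then $\mathbf{v}^*=c^*\mathbf{w}^*$ attains value $h(\mathbf{v}^*)=g(c^*)h(\mathbf{1})$ in the convex program, and conversely any feasible $\mathbf{v}$ gives $\mathbf{w}=\mathbf{v}/g^{-1}(h(\mathbf{v})/h(\mathbf{1}))$ feasible for the defender with value $g^{-1}(h(\mathbf{v})/h(\mathbf{1}))$. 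Multiplicativity of $g$ then yields $t=g^{-1}(h(\mathbf{1})/h(\mathbf{w}'))=1/c^*$, so the algorithm's output $\mathbf{w}^*=t\mathbf{w}'=\mathbf{w}'/c^*$ is optimal. The main obstacle I expect is the homogeneity bookkeeping---ensuring the rescaling step is well-defined (requiring $g$ to be strictly monotone and to cover the needed ratio) and that the bijection between the two problems preserves optimality. A minor secondary concern is verifying that $\mathbf{w}^*$ has strictly positive entries so that $\|\cdot\|_{\mathbf{w}^*}$ remains a norm, which holds because $M$ has strictly positive diagonal whenever $G$ is connected and $f$ is chosen so that no standard basis vector lies in the kernel of $f(L)$.
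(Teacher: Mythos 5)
Your proposal is correct and takes essentially the same route as the paper: the paper's geometric statement that the adversary's value is the smallest $K$ with $\{\|\mathbf{x}\|_{\mathbf{w}}\le 1\}\subseteq\{\mathbf{x}^T\Sigma\mathbf{x}\le K\}$ is exactly your LMI $\Sigma\preceq c\,\text{diag}(\mathbf{w})$, and both arguments then use the homogeneity of $h$ to trade the scale factor against the normalization $h(\mathbf{w})=h(\mathbf{1})$, yielding the same correspondence between defender-feasible weights and feasible diagonal $W$ in the convex program (your explicit bijection is the paper's contradiction argument). The extra observations you make (that $g$ must be a power function, and the implicit assumption that the rescaling $t$ exists) are not needed but are consistent with the paper, which makes the same implicit assumption in its step 3.
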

\begin{proof}
Let $h$ and $f$ be as required and write $\Sigma=(I+L)^{-1}f(L)(I+L)^{-1}$; by our assumption on $f$, $\Sigma\succeq 0$. For any fixed choice of $\mathbf{w}$, the adversary's optimal choice of $\mathbf{s}\in \mathbb{R}^n$ is obtained by finding the largest level set of the function $\mathbf{x}^T \Sigma\mathbf{x}$ that nontrivially intersects the ellipsoid $\|\mathbf{x}\|_{\mathbf{w}}\leq 1$. Equivalently, the optimal value of the adversary is the smallest level set of $\mathbf{x}^T \Sigma\mathbf{x}$ that contains the unit ball of the norm induced by $\|\cdot\|_{\mathbf{w}}$. In particular, the optimal value of the inner maximization for fixed $\mathbf{w}$ is the smallest value $K\geq 0$ such that
\begin{equation}
\label{eq:optimality}
    \{\mathbf{x}\in \mathbb{R}^n: \|\mathbf{x}\|_{\mathbf{w}}\leq 1\}\subseteq \{\mathbf{x}\in \mathbb{R}^n: \mathbf{x}^T\Sigma \mathbf{x}\leq K\}
\end{equation}

 By the restriction on $W$ to being diagonal in the above convex program (with necessarily nonnegative diagonal entries by the PSD constraint), recall that $W\succeq \Sigma$ if and only if for all $\mathbf{x}\in \mathbb{R}^n$, 
\begin{equation*}
    \|\mathbf{x}\|_{\mathbf{w}'}^2=\mathbf{x}^TW\mathbf{x}\geq \mathbf{x}^T\Sigma \mathbf{x};
\end{equation*}
where $\mathbf{w}'=\text{diag}(W)$; geometrically, this is equivalent to the containment 
\begin{equation}
\label{eq:feasibility}
    \{\mathbf{x}\in \mathbb{R}^n: \|\mathbf{x}\|_{\mathbf{w}'}\leq 1\}\subseteq \{\mathbf{x}\in \mathbb{R}^n: \mathbf{x}^T \Sigma \mathbf{x}\leq 1\}.
\end{equation}
 In particular, this means that $\Sigma\preceq W$ if and only if the unit ball of $\|\cdot \|_{\mathbf{w}'}$ is contained in the unit ball of the (semi)-norm induced by $\Sigma$. Let $W^*$ and $\mathbf{w}'=\text{diag}(W^*)$ be as stated, and let $t\geq 0$ be such that $h(t\mathbf{w}')=h(\mathbf{1})$. By the minimality of $\mathbf{w}'$ as well as the positive homogeneity, this implies that if the optimal value of the inner maximization for (\ref{eq:defenderproblem}) using $t\mathbf{w}'$ is $K$, then using homogeneity of the containments:
 \begin{gather*}
     \{\mathbf{x}\in \mathbb{R}^n: \|\mathbf{x}\|_{t\mathbf{w}'}\leq \frac{1}{\sqrt{K}}\}\subseteq \{\mathbf{x}\in \mathbb{R}^n: \mathbf{x}^T \Sigma \mathbf{x}\leq 1\}\\
     \iff \{\mathbf{x}\in \mathbb{R}^n: \|\mathbf{x}\|_{\mathbf{w}'}\leq \frac{1}{\sqrt{t\cdot K}}\}\subseteq \{\mathbf{x}\in \mathbb{R}^n: \mathbf{x}^T \Sigma \mathbf{x}\leq 1\};
 \end{gather*}
 as such, $K=1/t$.

Suppose now for a contradiction that an optimizer $\mathbf{w}^*$ of (\ref{eq:advproblem}) has strictly smaller objective value $K^*<1/t$ than that of $t\mathbf{w}'$. By (\ref{eq:optimality}), this is equivalent to
\begin{gather*}
    \{\mathbf{x}\in \mathbb{R}^n: \|\mathbf{x}\|_{\mathbf{w}^*}\leq \frac{1}{\sqrt{K^*}}\}\subseteq \{\mathbf{x}\in \mathbb{R}^n: \mathbf{x}^T \Sigma \mathbf{x}\leq 1\}\\
    \iff \{\mathbf{x}\in \mathbb{R}^n: \|\mathbf{x}\|_{K^*\mathbf{w}^*}\leq 1\}\subseteq \{\mathbf{x}\in \mathbb{R}^n: \mathbf{x}^T \Sigma \mathbf{x}\leq 1\}.
\end{gather*}
Evidently, $K^*\mathbf{w}^*$ satisfies (\ref{eq:feasibility}) yet
\begin{equation*}
    h(K^*\mathbf{w}^*)=g(K^*)h(\mathbf{w}^*)<g(1/t)h(t\mathbf{w}')=h(\mathbf{w}').
\end{equation*}
This violates the optimality of $\mathbf{w}'$, yielding the desired contradiction.
\end{proof}

Note that for certain choices of $h(\cdot)$, the above can be written as a standard semidefinite program. For instance, if $h(\mathbf{x})=\|\mathbf{x}\|_1=\sum_{i=1}^n\vert x_i\vert$, then the above is indeed a regular semidefinite program. If $h$ is instead the squared $\ell_2$-norm, one can similarly write it as a semidefinite program by adding extra positive semidefinite constraints and exploiting Schur complements; we omit the details here.

\section{Mixed-Graph Objectives}
\label{section:mixedgraph}

 In the previous sections, we have connected an adversary's ability to induce discord in a network with the spectral theory of the underlying graph, as well as considered ways to defend against these attacks. In each of these settings, these results suggest that the opinion dynamics of the network will necessarily ``soften'' the effect of these attacks, as  the disagreement is measured on the edges of the same network that dictate the opinion dynamics. However, one potential explanation of the success of recent adversarial attacks described in the introduction is that the opinion formation graph and disagreement measurement graph need not be the same, and may not even look similar. For instance, opinion formation may take place on the ``online'' network, via social media, while the disagreement the adversary cares about maximizing may be measured with respect to ``real-world'' connections. When this occurs, it need not be the case that the opinion dynamics implicitly equilibrate disagreements measured along the latter graph. When opinion formation and the disagreement graph look quite different, one expects that an adversary will be able to induce significantly more disagreement.

In this section, we explore the degree to which the adversary's power can increase when the opinion formation graph and measurement graph become independent. First, we provide nontrivial examples that show that, in some cases, having a different graph for the opinion dynamics and for the disagreement measurement can actually \emph{reduce the adversary's power to induce disagreement}; however, we provide a general lower bound that indicates that typically, the adversary will not be much worse off, if at all. We then show that the relevant relationship that will determine when an adversary gains extra power is an appropriate notion of \emph{spectral similarity} between the two graphs, not necessarily physical similarity. We conclude this section by showing concretely how a large cut misalignment in the graphs will enable an adversary to induce disagreement far beyond what is possible in the spectral theory in the single-graph setting.

Formally, we generalize the previous sections as follows: suppose that there are now \emph{two} relevant graph structures on $[n]$, $G_1$ and $G_2$, with associated Laplacians $L$ and $M$ (which will be the ``measurement" Laplacian). The first graph, $G_1$, is the graph structure on which the Friedkin-Johnsen dynamics take place while the second graph, $G_2$, is the graph where disagreement is measured. In this setting, the adversary chooses initial opinions to maximize the following objective (setting $R=1$ for notational ease):
\begin{equation*}
    \max_{\mathbf{s}\in \mathbb{R}^n:\|\mathbf{s}\|_2\leq 1} \mathbf{s}^T(I+L)^{-1}M(I+L)^{-1}\mathbf{s}=\lambda_{\text{max}}((I+L)^{-1}M(I+L)^{-1}).
\end{equation*}
Typically, we will be interested in settings where $L$ and $M$ are of comparable size, meaning similar total edge weight. If not, say if $L$ has much larger edge weight than $M$, then the effects of the opinion dynamics will cause all opinions to smooth to a much larger degree compared to the measurement measured via $M$, so the problem becomes degenerate though not for a theoretically interesting reason.

\jgedit{\begin{remark}
\label{rmk:nphard}
We note that the $\ell_{\infty}$ version of the adversary problem in this setting is easily seen to be $\mathsf{NP}$-hard. The reason is that one can let $M$ be the Laplacian of any graph, and simply set $L$ to be the trivial graph with no edges, at which point the problem can be shown to be precisely $\mathsf{MAXCUT}$ (see Section \ref{section:sim} for the relation to cuts). As such, by appealing to various complexity-theoretic assumptions, one can easily establish hardness-of-approximation results for this version of the problem \cite{khot2007optimal, haastad2001some}. In particular, it is $\mathsf{NP}$-hard to attain an approximation within better than a $17/16$-factor of the optimum in general and $\mathsf{UGC}$-hard to obtain a solution within a factor of better than $\alpha_{GW}\approx 1.14$ of the optimum in general with an $\ell_{\infty}$-adversary.
\end{remark}}

In general, it is not obvious how to connect the spectral structure of the above matrix with the spectral properties of the two underlying graphs, unless in the special case where the two graphs commute (and therefore, share an eigenbasis); this is indeed possible in certain special cases.

\begin{example}
Suppose $G_1$ is a $d$-regular, unweighted graph, and let $G_2$ be the $n-d-1$-regular, unweighted complementary graph. Then one can check that $M=nI-J-L$, where $J$ is the matrix of all-ones. Every matrix on the right side shares an eigenbasis, hence $L$ and $M$ commute. Similarly, suppose that $G_1$ or $G_2$ is a scaled version of the unweighted complete graph $K_n$. Then it is easy to check that $L_{K_n}$ commutes with every graph Laplacian $L$ as they will share a common eigenbasis.
\end{example}

One might suspect that, fixing $M$, $L=M$ is the optimal choice of graph Laplacian (subject to normalization) to minimize the amount of disagreement an adversary can induce. That is, letting $\mathcal{L}$ be the set of graph Laplacians subject to the edge normalization $\text{Tr}(L)=\text{Tr}(M)$, one might guess that
\begin{equation*}
    M\in \arg\min_{L\in \mathcal{L}} \lambda_{\text{max}}((I+L)^{-1}M(I+L)^{-1}).
\end{equation*}
However, this does not hold in general, via the following simple construction.

\begin{example}
Suppose that $G_2$ is a complete graph, in the sense that for some $\epsilon>0$, all off-diagonals of the Laplacian satisfy
\begin{equation*}
    M(i,j)< -\epsilon.
\end{equation*}
If we write out $M$ in the eigenbasis as
\begin{equation*}
    M=\sum_{i=2}^n \lambda_i(M)\mathbf{v}_i\mathbf{v}_i^T
\end{equation*}
and further suppose for simplicity that $1<\lambda_2(M)<\lambda_3(M)$, so that all nonzero eigenvalues lie on the right side of the peak of the function $f(x)=x/(1+x)^2$ at $x=1$. If we then set $L=M$, we would get an objective value of
\begin{equation*}
    \frac{\lambda_2(M)}{(1+\lambda_2(M))^2},
\end{equation*}
as we have seen before.
But consider instead the matrix
\begin{equation*}
    L = (\lambda_2(M)+\eta)\mathbf{v}_2\mathbf{v}_2^T+(\lambda_3(M)-\eta)\mathbf{v}_3\mathbf{v}_3^T+\sum_{i=4}^n \lambda_i(M)\mathbf{v}_i\mathbf{v}_i^T
\end{equation*}
for some $\eta>0$ sufficiently small (depending on $\epsilon$). It is easy to see that $L\in \mathcal{L}$ as the sum of eigenvalues, and therefore the trace, is constant, and moreover, $L$ will still be a Laplacian (with nonpositive off-diagonal entries) of some other graph by continuity. As $L$ and $M$ share an eigenbasis, it is easy to see that for $\eta$ small enough, 
\begin{equation*}
    \lambda_{\text{max}}((I+L)^{-1}M(I+L)^{-1})=\frac{\lambda_2(M)}{(1+\lambda_2(M)+\eta)^2}
\end{equation*}
which is strictly smaller than if $L=M$.
\end{example}

\begin{example}
In a more interesting example, suppose now that we further require that $(L)_{i,i}=(M)_{i,i}$ for each $i$. This means that each node has the same weighted degree in both graphs, which in particular implies they have the same trace. First, consider $G_2=C_4$, the four node unweighted cycle graph. One can numerically check that among all graphs $G_1$ satisfying this normalization, the mixed objective $(I+L)^{-1}M(I+L)^{-1}$ is minimized when $G_1$ is a weighted complete graph where each edge in the cycle has weight reduced from $1$ to approximately $.89$, and the remaining two edges are increased from $0$ to $.22$. The mixed objective has value approximately $0.1929$, whereas the single-graph objective $(I+M)^{-1}M(I+M)^{-1}$ has largest eigenvalue $2/9\approx .22$. For comparison, when $G_1$ is instead set to the appropriately scaled copy of the complete graph, the mixed-objective actually \emph{rises} to $.2975$. On the other hand, when $G_2=P_4$, the unweighted path graph on four nodes, it is numerically optimal for itself under the mixed-graph objective for all graphs satisfying the degree constraint. We are unaware of an analytic reason why this holds.
\end{example}

However, one expects that these examples are largely pathological. As a first approximation to controlling this quantity, our first result is the following general bound for positive semidefinite matrices. The idea is to apply the Courant-Fischer theorem to subspaces spanned by the eigenvectors of the two matrices to lower bound the spectral norm of the product of matrices.
\begin{lem}
\label{lem:matbound}
Let $B,C\in \mathbb{R}^{n\times n}$ be positive semidefinite matrices with eigenvalues in increasing order. Then
\begin{equation*}
    \max_{k\leq n}\bigg\{\lambda_{n-k+1}(C)^2\lambda_{k}(B)\bigg\}\leq \lambda_n(CBC)= \|CBC\|_2\leq \lambda_n(C)^2\lambda_n(B).
\end{equation*}
\end{lem}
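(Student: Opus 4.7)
My plan is to treat the two bounds separately: the upper bound follows from submultiplicativity of the operator norm, while the lower bound requires a variational / subspace-intersection argument for each $k$.

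For the upper bound and the identification $\lambda_n(CBC) = \|CBC\|_2$, I first want to note that $CBC$ is itself PSD, since writing $B = B^{1/2} B^{1/2}$ gives $CBC = (B^{1/2} C)^T (B^{1/2} C) \succeq 0$; hence its largest eigenvalue coincides with its operator norm. Submultiplicativity of the spectral norm then yields $\|CBC\|_2 \leq \|C\|_2 \|B\|_2 \|C\|_2 = \lambda_n(C)^2 \lambda_n(B)$.

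The substantive content is the lower bound. For each $k$, I plan to exhibit a test vector $x$ with $x^T CBC x \geq \lambda_{n-k+1}(C)^2 \lambda_k(B) \|x\|^2$; the Rayleigh-quotient characterization of $\lambda_n(CBC)$ then finishes the proof. Let $W_k$ be the $k$-dimensional span of the top $k$ eigenvectors of $C$ (corresponding to eigenvalues $\lambda_{n-k+1}(C) \leq \cdots \leq \lambda_n(C)$); on $W_k$ one has $x^T C^2 x \geq \lambda_{n-k+1}(C)^2 \|x\|^2$. Let $U_k$ be the $(n-k+1)$-dimensional span of the eigenvectors of $B$ corresponding to the top $n-k+1$ eigenvalues $\lambda_k(B) \leq \cdots \leq \lambda_n(B)$; on $U_k$ one has $y^T B y \geq \lambda_k(B) \|y\|^2$. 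The crucial observation is that $W_k$ is $C$-invariant, since it is spanned by eigenvectors of $C$. Assuming WLOG $\lambda_{n-k+1}(C) > 0$ (otherwise the bound is trivially $0$), the restriction $C|_{W_k}$ is invertible with $C(W_k) = W_k$. Since $\dim W_k + \dim U_k = n+1 > n$, there exists a nonzero $y \in W_k \cap U_k$, and taking $x = (C|_{W_k})^{-1} y \in W_k$ yields $Cx = y \in U_k$. Chaining the two lower bounds gives $x^T CBC x = y^T B y \geq \lambda_k(B) \|y\|^2 = \lambda_k(B) \, x^T C^2 x \geq \lambda_k(B) \lambda_{n-k+1}(C)^2 \|x\|^2$, as required; maximizing over $k$ completes the proof.

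The main obstacle I anticipate is making the subspace-intersection step precise, since one needs $x$ to lie in $W_k$ while $Cx$ simultaneously lies in $U_k$. The key point that makes this work is the $C$-invariance of $W_k$: pulling any $y$ in the (necessarily nonempty) intersection $W_k \cap U_k$ back through $(C|_{W_k})^{-1}$ keeps $x$ inside $W_k$, so both conditions hold at once. Degenerate cases where some top eigenvalues of $C$ vanish collapse the claimed bound to $0$, so they are absorbed by the preliminary WLOG reduction.
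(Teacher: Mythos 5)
Your proof is correct and follows essentially the same route as the paper: the upper bound via submultiplicativity, and the lower bound via a dimension-count intersection of the top-$k$ eigenspace of $C$ with the top-$(n-k+1)$ eigenspace of $B$, pulling the intersection vector back through $C$ restricted to its invariant subspace. The paper phrases the final estimate as a Rayleigh quotient $\mathbf{z}^T B\mathbf{z}/\mathbf{z}^T C^{-2}\mathbf{z}$ rather than your direct chaining through $x^T C^2 x$, but this is the same computation.
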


\begin{proof}
Note that $CBC\succeq 0$ by the fact $B\succeq 0$. The upper bound follows directly from the submultiplicativity of the operator norm, which for symmetric positive-semidefinite matrices is just the top eigenvalues.

For the lower bound, we use the Courant-Fischer theorem. First, note that if $\lambda_{n-k+1}(C)=0$, the result is trivial, so suppose it is strictly positive. Let $U$ be the linear subspace spanned by the top $k$ eigenvectors of $C$, and let $V$ be the subspace spanned by the top $n-k+1$ eigenvectors of $B$. These subspaces must intersect non-trivially by a simple dimension argument, so there exists some $\mathbf{z}\in U\cap V$ with unit length. Note that $U$ is an invariant subspace for $C$, and moreover, $C$ is bijective on $U$ by the nondegeneracy of $\lambda_{n-k+1}(C)$. Now, let $\mathbf{x}=C^{-1}\mathbf{z}$, where we view $C^{-1}$ as restricted to $U$. By the variational formula of $\lambda_n$,
\begin{equation*}
    \lambda_n(CBC)\geq \frac{\mathbf{x}^TCBC\mathbf{x}}{\|\mathbf{x}\|_2^2}=\frac{\mathbf{z}^TB\mathbf{z}}{\mathbf{z}^TC^{-2}\mathbf{z}}\geq \lambda_{k}(B)\frac{\|\mathbf{z}\|_2^2}{\|C^{-1}\mathbf{z}\|_2^2}=\frac{\lambda_{k}(B)}{\|C^{-1}\mathbf{z}\|_2^2}
\end{equation*}
The second inequality follows from Courant-Fischer, as $\mathbf{z}$ lies in the span of the top $n-k+1$ eigenvectors of $B$ by assumption, so the quadratic form in the numerator gives at least $\lambda_{n-(n-k+1)+1}(B)\|\mathbf{z}\|^2=\lambda_{k}(B)\|\mathbf{z}\|^2$.
Then $\|C^{-1}\mathbf{z}\|^2\leq \lambda_{\text{max}}(C^{-1})^2\|\mathbf{z}\|^2=\frac{\|\mathbf{z}\|^2}{\lambda_{n-k+1}(C)^2}$,
as the largest eigenvalue of $C^{-1}$ restricted to $U$ is the inverse of the smallest eigenvalue of $C$ restricted to $U$. Plugging this in gives the desired inequality. As this holds for all $k\leq n$, it holds for the maximum.
\end{proof}

From this simple lemma, one can immediately obtain a lower bound in the adversary's optimization problem in the mixed-graph setting.
\begin{corollary}
Let $L,M$ be as above. Then 
\begin{equation*}
    \lambda_{\text{max}}((I+L)^{-1}M(I+L)^{-1})\geq \max_{1\leq k\leq n}\frac{\lambda_{k}(M)}{(1+\lambda_{k}(L))^2}.
\end{equation*}
\end{corollary}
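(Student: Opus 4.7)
The plan is a direct application of Lemma~\ref{lem:matbound} with the substitution $C = (I+L)^{-1}$ and $B = M$. Both are positive semidefinite (the first because $L \succeq 0$ implies $I+L \succ 0$, the second because $M$ is a graph Laplacian), so the hypotheses of the lemma are satisfied and we have
\begin{equation*}
    \lambda_{\max}((I+L)^{-1} M (I+L)^{-1}) \;\geq\; \max_{1\leq k\leq n} \lambda_{n-k+1}(C)^2\,\lambda_k(M).
\end{equation*}

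The one substantive step is to translate the indexing of eigenvalues of $C = (I+L)^{-1}$ (ordered increasingly, per the lemma's convention) back into the indexing of eigenvalues of $L$. The function $x \mapsto 1/(1+x)$ is strictly decreasing on $[0,\infty)$, so the eigenvalues of $(I+L)^{-1}$ in increasing order are $\tfrac{1}{1+\lambda_n(L)} \leq \tfrac{1}{1+\lambda_{n-1}(L)} \leq \cdots \leq \tfrac{1}{1+\lambda_1(L)}$. Thus $\lambda_{n-k+1}(C) = \tfrac{1}{1+\lambda_k(L)}$. Plugging this into the lower bound gives
\begin{equation*}
    \lambda_{n-k+1}(C)^2\,\lambda_k(M) \;=\; \frac{\lambda_k(M)}{(1+\lambda_k(L))^2},
\end{equation*}
and taking the maximum over $k$ yields the claimed inequality.

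There is no real obstacle here; the work was already done in Lemma~\ref{lem:matbound}, where the Courant--Fischer argument produced the bound $\lambda_n(CBC) \geq \lambda_{n-k+1}(C)^2 \lambda_k(B)$ via the dimension-counting intersection of the top-$k$ eigenspace of $C$ with the top-$(n-k+1)$ eigenspace of $B$. The only care needed is to be consistent about whether eigenvalues are indexed increasingly or decreasingly, and to note that $C$ and $L$ share eigenvectors (so the spectral calculus is transparent) even though $L$ and $M$ need not commute.
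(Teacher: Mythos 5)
Your proof is correct and follows exactly the paper's argument: apply Lemma~\ref{lem:matbound} with $B=M$ and $C=(I+L)^{-1}$, and observe that $\lambda_{n-k+1}((I+L)^{-1})=\frac{1}{1+\lambda_k(L)}$. The extra care you take with the eigenvalue indexing (via the monotonicity of $x\mapsto 1/(1+x)$) is exactly the translation the paper performs implicitly.
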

\begin{proof}
This is immediate from the previous lemma, with $B=M$ and $C=(I+L)^{-1}$, simply noting that
\begin{equation*}
    \lambda_{n-k+1}((I+L)^{-1})=\frac{1}{1+\lambda_{k}(L)}.
\end{equation*}
\end{proof}

Using just this lower bound, in the special case where $M=L_{K_n}$, the Laplacian of the unweighted complete graph on $n$ nodes, it follows that the optimal choice of $L$ subject to having the same trace as $L_{K_n}$ to minimize the mixed-graph objective is just $L_{K_n}$ \emph{itself}. This holds because for any graph Laplacian $L$ satisfying $\text{Tr}(L)=\text{Tr}(L_{K_n})$, a similar argument to that of Corollary \ref{cor:completeoptimal} implies that some nontrivial eigenvalue of $L_{1}$ must be at most $n$. The lower bound of the previous corollary then asserts that the mixed-graph objective can only increase, with equality if and only if $L=L_{K_n}$.

Moreover, the previous corollary asserts that if the eigenvalues of $L$ and $M$ are only \emph{numerically} similar in the appropriate ordering, then necessarily the objective value will be \emph{approximately at least} the corresponding objective value we considered in Section \ref{section:disagreement}. \jgedit{Explicitly, this will arise for any graphs with \emph{cospectral} Laplacians; for instance, any isomorphic graphs will have this property (so in particular, if $M$ and $L$ differ by a permutation), as will any strongly regular graphs with same parameters.} This suggests that while we have shown explicit examples where having two distinct matrices can even reduce the adversary's power, this case ought be viewed as rather pathological.

\subsection{Spectral Similarity}
\label{section:sim}
The above analysis relied only on a general lower bound involving positive semidefinite matrices. Next, we aim to characterize the relevant structure of $L$ and $M$ that causes the objective function to remain quite close to the value in the single-graph case, and similarly when the objective function will increase. The former case will indicate that an adversary gains little benefit from the misalignment of $G_1$ and $G_2$, while the latter case corresponds to an underlying network that can be sharply exploited to induce large disagreement. Intuitively, if $L\approx M$ component-wise, then
\begin{equation*}
    \lambda_{\text{max}}((I+L)^{-1}M(I+L)^{-1})\approx \lambda_{\text{max}}((I+M)^{-1}M(I+M)^{-1}),
\end{equation*}
by the continuity of matrix inverses and eigenvalues. Before proceeding, we need a definition:

\begin{defn}
For any graph $G=(V,E,w)$ and $S,T\subseteq V$ such that $S\cap T=\emptyset$, we define 
\begin{equation*}
    \textsf{cut}_G(S,T)=\sum_{i\in S, j\in T} w_G(i,j).
\end{equation*}
We will write $\textsf{cut}_G(S):=\textsf{cut}_G(S,S^c)$.

For any subset $S\subseteq [n]$, we write $\chi_S$ for the $\pm 1$ indicator vector of $S$, i.e. $\chi_S(i)=1$ if $i\in S$ and $-1$ if $i\not\in S$. Then it is easy to see that $\|\chi_S\|^2=n$, and that for any graph $G$ with Laplacian $L$,
\begin{align*}
    \chi_S^T L\chi_s&=\sum_{(i,j)\in E_i} w_{G}(i,j)(\chi_S(i)-\chi_S(j))^2\\
    &=4\sum_{i\in S, j\not\in S} w_{G}(i,j)=4\mathsf{cut}_{G}(S).
\end{align*}
\end{defn}

We can now provide a quantitative form of this assertion:

\begin{thm}
Let $G_1,G_2$ be $n$-node graphs with Laplacians $M$ and $L$, respectively. Suppose that the following holds for some parameters $\eta,\gamma,\epsilon>0$:
\begin{enumerate}
    \item For each $i\in [n]$, the weighted symmetric difference of their neighborhoods is bounded by $\eta$, i.e. for all $i\in [n]$
    \begin{equation}
        \sum_{j\neq i} \vert w_1(i,j)-w_2(i,j)\vert\leq \eta.
    \end{equation}
    
    \item For all $i\in [n]$, the absolute difference in weighted degrees of $i$ in $G_1$ and $G_2$ is at most $\gamma$.
    
    \item For any disjoint subsets $S,T\subseteq [n]$, we have
    \begin{equation}
        \vert \textsf{cut}_{G_1}(S,T)-\textsf{cut}_{G_2}(S,T)\vert \leq \epsilon \sqrt{\vert S\vert\vert T\vert}.
    \end{equation}
\end{enumerate}
Then, we have
\begin{equation}
    \max_{i\in [n]} \frac{\lambda_i(M)-2\Delta}{(1+\lambda_i(M)+\Delta)^2}\leq \lambda_{n}((I+L)^{-1}M(I+L)^{-1}) \leq \max_{i\in [n]} \frac{\lambda_i(M)+2\Delta}{(1+\lambda_i(M)-\Delta)^2},
\end{equation}
where $\Delta=O(\epsilon \ln(\eta/\epsilon)+\gamma)$. In particular, if all nodes have the same weighted degree in both $G_1$ and $G_2$, $\Delta = O(\epsilon \ln(\eta/\epsilon))$.
\end{thm}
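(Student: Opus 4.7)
The plan is to first establish the spectral estimate $\|L - M\|_{op} \leq \Delta$ with $\Delta = O(\gamma + \epsilon \ln(\eta/\epsilon))$, and then deduce the eigenvalue sandwich from a Loewner-order perturbation argument. Concretely, given $\|L-M\|_{op}\leq\Delta$, we have $L-\Delta I\preceq M\preceq L+\Delta I$. Setting $P:=(I+L)^{-1}\succeq 0$ and conjugating (which preserves the order) yields
\begin{equation*}
PLP - \Delta P^2\ \preceq\ PMP\ \preceq\ PLP + \Delta P^2.
\end{equation*}
Since $P$ and $L$ commute, $PLP$ and $P^2$ are simultaneously diagonalizable, so $PLP\pm\Delta P^2$ has eigenvalues $(\lambda_i(L)\pm\Delta)/(1+\lambda_i(L))^2$. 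Taking top eigenvalues on each side and translating $\lambda_i(L)$-bounds into $\lambda_i(M)$-bounds via Weyl's inequality $|\lambda_i(L)-\lambda_i(M)|\leq\Delta$ immediately produces both directions of the sandwich in the statement, since the numerator satisfies $\lambda_i(L)+\Delta\leq \lambda_i(M)+2\Delta$ (and similarly below) while $(1+\lambda_i(L))^2$ is sandwiched between $(1+\lambda_i(M)\pm\Delta)^2$.

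\textbf{Bounding $\|L-M\|_{op}$.} Decompose $L-M = (D_L - D_M) - (A_L - A_M)$ into the diagonal degree difference and the off-diagonal adjacency difference. Hypothesis (2) immediately gives $\|D_L-D_M\|_{op}\leq\gamma$. The bulk of the work is to show $\|A_L-A_M\|_{op} = O(\epsilon\ln(\eta/\epsilon))$ using hypotheses (1) and (3). I plan a dyadic chaining argument: for any unit vectors $\mathbf{x},\mathbf{y}$, decompose $\mathbf{x}=\sum_k \mathbf{x}^{(k)}$ and $\mathbf{y}=\sum_{k'}\mathbf{y}^{(k')}$ by isolating coordinates whose magnitude lies in the band $[2^{-k-1},2^{-k}]$, so that each $\mathbf{x}^{(k)}$ is (up to a factor of roughly $2^{-k}$) a signed indicator $\chi_{S_k^+}-\chi_{S_k^-}$. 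For each pair $(k,k')$ the bilinear form $(\mathbf{x}^{(k)})^T(A_L-A_M)\mathbf{y}^{(k')}$ reduces to a sum of at most four cut quantities $\mathbf{1}_S^T(A_L-A_M)\mathbf{1}_T$, each of which is controlled by hypothesis (3) on the disjoint part $S\setminus T$ versus $T\setminus S$ and by the entrywise bound implied by hypothesis (1) on the overlap $S\cap T$. Applying Cauchy--Schwarz to the band-sizes $|S_k^\pm|$, $|T_{k'}^\pm|$ converts $\epsilon\sqrt{|S||T|}$ into a contribution proportional to $\epsilon\|\mathbf{x}^{(k)}\|_2\|\mathbf{y}^{(k')}\|_2$ at each scale pair. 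Since hypothesis (1) forces all entries of $A_L-A_M$ to be at most $\eta$, only dyadic bands between $\epsilon$ and $\eta$ contribute non-trivially, giving $O(\ln(\eta/\epsilon))$ relevant scales and hence the claimed logarithmic loss when the geometric sums over $k$ and $k'$ are evaluated.

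\textbf{The main obstacle.} The hardest step is this chaining argument: hypothesis (3) applies only to disjoint $S$ and $T$, whereas the operator norm is a supremum over arbitrary unit vectors whose level-set indicators generically have overlapping supports. Careful bookkeeping is needed to split the bilinear form into a genuinely ``cut-like'' piece (controlled by (3)) and overlapping pieces (controlled by (1) and (2)), and then to ensure the dyadic sums converge with only a logarithmic factor. The $\ln(\eta/\epsilon)$ loss is known to be essentially sharp in transferences between cut-type norms and the spectral norm, which is why the theorem states $\Delta = O(\epsilon\ln(\eta/\epsilon)+\gamma)$ rather than a purely linear dependence on $\epsilon$.
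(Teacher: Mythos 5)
Your proposal is correct and follows essentially the same route as the paper: the second half (the Loewner sandwich $L-\Delta I\preceq M\preceq L+\Delta I$, conjugation by $(I+L)^{-1}$, and two applications of Weyl's monotonicity theorem) is identical, and the operator-norm bound $\|M-L\|=O(\epsilon\ln(\eta/\epsilon)+\gamma)$ that you propose to prove by dyadic chaining is exactly what the paper obtains by citing Lemma 3.3 of Bilu and Linial (whose proof is precisely that level-set decomposition), with the $\gamma$ term likewise peeled off from the diagonal. The only difference is that you open up the cited black box rather than invoking it.
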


\begin{remark}
Before proceeding with the proof, note that if one only assumes the first condition above, and even if every node has the same degree in both graphs (therefore satisfying the second condition with $\gamma=0$), the best bound one can generically get on the spectral radius of $M-L$ is $O(\eta)$ using the fact that the largest eigenvalue of a matrix is at most the largest $\ell_1$ norm of a row. When the combinatorial structures are assumed to be very similar along every subset, as is done here, the dependence on $\eta$ becomes \emph{logarithmic}, and gains from the closeness in the $\epsilon$ term as well.
\end{remark}

\begin{proof}
First, by our assumptions, we may apply Lemma 3.3 of Bilu and Linial \cite{bilu2006lifts}, where we just note that if $u$ is the $\{0,1\}$ indicator of $S$ and $v$ is the $\{0,1\}$ indicator for $T$ for some disjoint subsets $S,T\subseteq [n]$, then 
\begin{equation}
    \vert u^T(M-L)v\vert= \vert \textsf{cut}_{G_1}(S,T)-\textsf{cut}_{G_2}(S,T)\vert.
\end{equation}
We also note that by inspecting the proof of that lemma, one can apply our condition (2) with parameter $\gamma$ instead of $O(\epsilon \ln(\eta/\epsilon))$ by just paying it in the bound, from which it follows that $\Delta:=\|M-L\|\leq O(\epsilon \ln (\eta/\epsilon)+\gamma)$.

From this, we have
\begin{equation}
    L-\Delta\cdot I\preceq M\preceq L + \Delta\cdot I,
\end{equation}
which in turn implies
\begin{equation}
    (I+L)^{-1}(L-\Delta\cdot I)(I+L)^{-1}\preceq (I+L)^{-1}M(I+L)^{-1}\preceq (I+L)^{-1}(L + \Delta\cdot I)(I+L)^{-1}.
\end{equation}
Finally, note that by Weyl's monotonicity theorem, we have $\max\{\lambda_i(M)-\Delta,0\}\leq \lambda_i(L)\leq \lambda_i(M)+\Delta$. Combining all these bounds with another application of Weyl's monotonicity theorem, we have
\begin{equation}
    \max_{i\in [n]} \frac{\lambda_i(M)-2\Delta}{(1+\lambda_i(M)+\Delta)^2}\leq \lambda_{n}((I+L)^{-1}M(I+L)^{-1}) \leq \max_{i\in [n]} \frac{\lambda_i(M)+2\Delta}{(1+\max\{\lambda_i(M)-\Delta,0\})^2}.
\end{equation}
\end{proof}

That high physical similarity of the graphs implies the problem is not changed significantly is not particularly surprising, though the previous result gives exponentially better dependence on the physical similarity than what can be attained by naive applications of matrix perturbation bounds. However, we now show that high physical similarity edge-by-edge is merely sufficient, but \emph{not necessary}; another relevant property that will ensure that this holds is \emph{spectral similarity}, as defined by Spielman and Teng \cite{spielman2011spectral}.

\begin{defn}
$L$ and $M$ are $\epsilon$-\emph{spectral approximations} for each other for some $\epsilon>0$ if 
\begin{equation*}
    \frac{1}{1+\epsilon}L\preceq M\preceq (1+\epsilon)L.
\end{equation*}
\end{defn}

Note that this definition is symmetric in $L$ and $M$. It is easy to show from this definition that if $L$ and $M$ are $\epsilon$-spectral approximations of each other, then the adversary's objective value cannot differ too much from the single-graph setting with just $M$.
\begin{thm}
\label{thm:spectralsim}
Suppose that $L$ and $M$ are $\epsilon$-spectral approximations of each other. Then
    \begin{equation*}
    \frac{1}{1+\epsilon}\max_{i\in [n]}\min_{c\in [\frac{1}{1+\epsilon}, 1+\epsilon]}\frac{c\lambda_i(M)}{(1+c\lambda_i(M))^2}\leq \lambda_{\text{max}}( (I+L)^{-1}M(I+L)^{-1})\leq (1+\epsilon)\max_{i\in [n]}\max_{c\in [\frac{1}{1+\epsilon}, 1+\epsilon]}\frac{c\lambda_i(M)}{(1+c\lambda_i(M))^2}.
\end{equation*}
\end{thm}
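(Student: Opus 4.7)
The plan is to use the spectral approximation to sandwich the mixed matrix $(I+L)^{-1} M (I+L)^{-1}$ between scalar multiples of the single-graph matrix $(I+L)^{-1} L (I+L)^{-1}$, whose top eigenvalue was already computed in Section \ref{section:disagreement} as $\max_i \lambda_i(L)/(1+\lambda_i(L))^2$, and then translate between eigenvalues of $L$ and $M$ using the spectral approximation once more. This two-step strategy turns the inequality on the matrices into the claimed inequality on the scalar function $c\lambda_i(M)/(1+c\lambda_i(M))^2$.

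First, I would start from $\tfrac{1}{1+\epsilon}L\preceq M\preceq (1+\epsilon)L$ and conjugate both sides by the PSD matrix $(I+L)^{-1}$, which preserves the Loewner order, to obtain
\begin{equation*}
    \tfrac{1}{1+\epsilon}(I+L)^{-1} L (I+L)^{-1} \preceq (I+L)^{-1} M (I+L)^{-1} \preceq (1+\epsilon)(I+L)^{-1} L (I+L)^{-1}.
\end{equation*}
Weyl monotonicity of the top eigenvalue then pins $\lambda_{\max}((I+L)^{-1}M(I+L)^{-1})$ between $(1+\epsilon)^{\pm 1}$ multiples of $\max_i \lambda_i(L)/(1+\lambda_i(L))^2$. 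Next, I would invoke the symmetric form $\tfrac{1}{1+\epsilon}M \preceq L \preceq (1+\epsilon)M$ of the spectral approximation and apply Weyl monotonicity eigenvalue by eigenvalue in sorted order to conclude that, for each $i$, we may write $\lambda_i(L) = c_i \lambda_i(M)$ for some $c_i\in [1/(1+\epsilon),1+\epsilon]$ (with the trivial case $\lambda_i(M)=\lambda_i(L)=0$ on the shared kernel $\mathrm{span}(\mathbf{1})$ handled separately).

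Substituting this representation yields $\lambda_i(L)/(1+\lambda_i(L))^2 = c_i\lambda_i(M)/(1+c_i\lambda_i(M))^2$ with each $c_i$ in the stated interval. For the upper bound, I would enlarge each term by taking the maximum over $c \in [1/(1+\epsilon), 1+\epsilon]$, producing the inner $\max_c$ in the claim. For the lower bound, I would exploit that the particular $c_i$ appearing is a feasible point, hence the corresponding term is at least $\min_{c}c\lambda_i(M)/(1+c\lambda_i(M))^2$; taking max over $i$ and picking up the leading $\tfrac{1}{1+\epsilon}$ factor gives the matching lower bound.

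The one subtlety worth flagging, though not a serious obstacle, is that the passage from the PSD sandwich $\tfrac{1}{1+\epsilon}M\preceq L\preceq (1+\epsilon)M$ to a \emph{coordinatewise} comparison of the sorted spectra $\lambda_i(L)$ and $\lambda_i(M)$ requires applying Weyl's monotonicity principle in both directions; one cannot simply argue from a single inequality. Once this indexing is established, everything else is a routine manipulation of the scalar function $x\mapsto x/(1+x)^2$, and crucially this approach does not require $L$ and $M$ to commute, which would otherwise trivialize the problem.
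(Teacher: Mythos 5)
Your proposal is correct and follows essentially the same route as the paper's proof: conjugate the spectral-approximation sandwich by $(I+L)^{-1}$, apply Weyl monotonicity to the top eigenvalue, and then use Weyl monotonicity again to compare the sorted spectra of $L$ and $M$ so that each $\lambda_i(L)$ can be written as $c_i\lambda_i(M)$ with $c_i\in[\tfrac{1}{1+\epsilon},1+\epsilon]$. If anything, your write-up is slightly more explicit than the paper's at the final step, where the paper simply says to plug in the ``fuzzy'' eigenvalues.
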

\begin{proof}
The proof is essentially immediate from the definition: pre- and post-multiplying by $(I+L)^{-1}$, we immediately get from the definition that
\begin{equation*}
    \frac{1}{1+\epsilon}(I+L)^{-1}L(I+L)^{-1}\preceq (I+L)^{-1}M(I+L)^{-1}\preceq (1+\epsilon)(I+L)^{-1}L(I+L)^{-1}.
\end{equation*}
By Weyl's monotonicity theorem, this implies the corresponding inequality on each of the eigenvalues. We deduce that
\begin{equation*}
    \frac{1}{1+\epsilon}\lambda_n((I+L)^{-1}L(I+L)^{-1})\preceq \lambda_n((I+L)^{-1}M(I+L)^{-1})\preceq (1+\epsilon)\lambda_n((I+L)^{-1}L(I+L)^{-1}).
\end{equation*}
To relate this back to the matrix $(I+M)^{-1}M(I+M)^{-1}$, we again use Weyl's monotonicity theorem, as then for each $i\in [n]$, 
\begin{equation*}
    \frac{1}{1+\epsilon}\lambda_i(L)\leq \lambda_i(M)\leq (1+\epsilon)\lambda_i(L).
\end{equation*}
In particular, $\lambda_i(L)$ lies in the $(1+\epsilon)$-neighborhood of $\lambda_i(M)$. We showed above that
\begin{equation*}
    \lambda_{\text{max}}( (I+M)^{-1}M(I+M)^{-1})=\max_{i\in [n]}\frac{\lambda_i(M)}{(1+\lambda_i(M))^2};
\end{equation*}
plugging in these ``fuzzy'' versions of the eigenvalues gives the desired inequalities.
\end{proof}

\jgedit{\begin{remark}
Note that while the definition of spectral similarity is symmetric, it need not commute nicely with positive rational expressions of the Laplacians. The reason is that in general, positive rational expressions need not be operator monotone, i.e. may not respect the Loewner order. For instance, $0\preceq A\preceq B$ does not imply $A^2\preceq B^2$, requiring us to appeal to Weyl's monotonicity theorem to translate between $M$ and $L$.
\end{remark}}

As a corollary, this result shows that it is not necessary for $L$ and $M$ to be extremely close in, say, Frobenius or $\ell_1$ norm on each row for the eigenvalues for the adversary's objective value to remain close to the single-graph setting. This is because by seminal results of Batson, Spielman, and Srivastava, \emph{every} graph Laplacian has a weighted $\epsilon$-spectral approximation that corresponds to a graph with $O(n/\epsilon^2)$ edges \cite{batson2012twice}. Necessarily, these graphs are physically quite different, as they can differ in $\Theta(n^2)$ entries. The previous result shows that this is irrelevant; in the mixed-graph objective function, replacing one of these graphs by the other does not meaningfully change the adversary's power to induce disagreement under the Friedkin-Johnsen dynamics.

\subsection{Spectral Dissimilarity}
In this section, we provide a partial converse to the previous section; we provide a simple condition that will imply that the relevant largest eigenvalue is large that relates to the spectral dissimilarity of $L$ and $M$. We then show how this can be realized in the special case of cuts in $G_1$ and $G_2$; it will turn out that if $G_1$ and $G_2$ are highly misaligned in the sense of having even one drastically different vertex cut, then the largest eigenvalue is necessarily large.

\begin{defn}
We say $L$ and $M$ are $(\epsilon,\eta)$-\emph{bad spectral approximations} if there exists $\mathbf{x}\in \mathbb{R}^n$ with $\|\mathbf{x}\|^2=n$ such that
$\mathbf{x}^TL\mathbf{x}\leq \epsilon$
   and $ \mathbf{x}^TM\mathbf{x}\geq \eta.$
\end{defn}

This definition implies that $M$ and $L$ are not $\eta/\epsilon$-spectral approximations for each other, but we will crucially be interested in the actual values, not just the ratio. Moreover, notice that this is not symmetric in the directions of the inequalities, and we will actually care about the numerical values, not just the ratio. For these reasons, the following does not constitute an exact converse, which is essentially immediate from this definition:
\begin{proposition}
\label{prop:badapprox}
Suppose that $L$ and $M$ are $(\epsilon,\eta)$-bad spectral approximations. Then
\begin{equation*}
    \lambda_n((I+L)^{-1}M(I+L)^{-1})\geq \frac{\eta}{n+(\|L\|+2)\epsilon}.
\end{equation*}
\end{proposition}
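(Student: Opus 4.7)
The plan is to apply the variational (Rayleigh quotient) characterization of the top eigenvalue directly to a well-chosen test vector derived from the witness $\mathbf{x}$ in the bad-spectral-approximation hypothesis. Specifically, since $(I+L)^{-1}M(I+L)^{-1}$ has the form $C^T M C$ with $C = (I+L)^{-1}$, the natural way to exploit a quadratic form lower bound on $M$ in the direction $\mathbf{x}$ is to pre-image $\mathbf{x}$ through $C$. So I would set $\mathbf{y} := (I+L)\mathbf{x}$ as the test vector, so that $(I+L)^{-1}\mathbf{y} = \mathbf{x}$ and therefore
\begin{equation*}
\mathbf{y}^T (I+L)^{-1} M (I+L)^{-1} \mathbf{y} \;=\; \mathbf{x}^T M \mathbf{x} \;\geq\; \eta.
\end{equation*}
This immediately handles the numerator of the Rayleigh quotient.

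The remaining work is to upper bound the denominator $\|\mathbf{y}\|_2^2 = \mathbf{x}^T (I+L)^2 \mathbf{x} = \|\mathbf{x}\|_2^2 + 2\,\mathbf{x}^T L \mathbf{x} + \mathbf{x}^T L^2 \mathbf{x}$. The first two summands are controlled directly by the hypotheses: $\|\mathbf{x}\|_2^2 = n$ and $\mathbf{x}^T L \mathbf{x} \leq \epsilon$. The only slightly subtle piece, and the step I expect to be the main (though minor) obstacle, is controlling $\mathbf{x}^T L^2 \mathbf{x}$ — we do not have an a priori bound on this quantity from the hypotheses, only on $\mathbf{x}^T L \mathbf{x}$. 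The key observation is that since $L \succeq 0$, we have the operator inequality $L^2 \preceq \|L\|\cdot L$ (diagonalize $L$ and note each eigenvalue satisfies $\lambda_i^2 \leq \|L\|\lambda_i$). Applying this in the quadratic form gives $\mathbf{x}^T L^2 \mathbf{x} \leq \|L\| \cdot \mathbf{x}^T L \mathbf{x} \leq \|L\|\,\epsilon$.

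Putting it all together:
\begin{equation*}
\|\mathbf{y}\|_2^2 \;\leq\; n + 2\epsilon + \|L\|\,\epsilon \;=\; n + (\|L\|+2)\epsilon.
\end{equation*}
Combining with the variational inequality
\begin{equation*}
\lambda_n\bigl((I+L)^{-1}M(I+L)^{-1}\bigr) \;\geq\; \frac{\mathbf{y}^T (I+L)^{-1} M (I+L)^{-1} \mathbf{y}}{\|\mathbf{y}\|_2^2}
\end{equation*}
and the bounds on numerator and denominator gives the claimed inequality. The proof is essentially a two-line Rayleigh-quotient calculation whose only content is the operator inequality $L^2 \preceq \|L\| L$; no spectral decomposition of the composite matrix is needed.
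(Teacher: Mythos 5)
Your proposal is correct and follows essentially the same route as the paper: both apply the Rayleigh quotient with the test vector $(I+L)\mathbf{x}$ (equivalently, the change of variables $\mathbf{z}=(I+L)\mathbf{x}$), bound the numerator by $\eta$, and control the denominator via $\mathbf{x}^T(I+2L+L^2)\mathbf{x}\leq n+(\|L\|+2)\epsilon$ using $\mathbf{x}^T L^2\mathbf{x}\leq \|L\|\,\mathbf{x}^T L\mathbf{x}$. No differences worth noting.
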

\begin{proof}
This follows from the variational characterization of eigenvalues:
\begin{align*}
    \lambda_n((I+L)^{-1}M(I+L)^{-1})&=\max_{\mathbf{z}\in \mathbb{R}^n}\frac{\mathbf{z}^T(I+L)^{-1}M(I+L)^{-1}\mathbf{z}}{\mathbf{z}^T\mathbf{z}}\\
    &=\max_{\mathbf{z}\in \mathbb{R}^n}\frac{\mathbf{z}^TM\mathbf{z}}{\mathbf{z}^T(I+L)^2\mathbf{z}}.
\end{align*}
The proof follows from plugging in the guaranteed vector $\mathbf{x}$, simply noting that
\begin{equation*}
    \mathbf{x}^T(I+L)^2\mathbf{x}\leq \mathbf{x}^T(I+2L+L^2)\mathbf{x}\leq n+(\| L\|+2)\epsilon.
\end{equation*}
\end{proof}

This abstract result shows that if $L$ and $M$ are spectrally misaligned in the above sense, then the largest eigenvalue of the mixed-graph objective is large. Tangibly, one specific way that this can occur is the if $L$ and $M$ have very different cut structure. Plugging in characteristic vectors into Proposition \ref{prop:badapprox} and taking the maximum yields
\begin{corollary}
\label{cor:cut}
For any $L,M$,
\begin{align*}
    \lambda_n((I+L)^{-1}M(I+L)^{-1})&\geq \max_{S\subseteq V} \frac{4\mathsf{cut}_{G_2}(S)}{n+4(\|L\|+2)\mathsf{cut}_{G_1}(S)}\\
    &\geq \max_{S\subseteq V} \frac{4\mathsf{cut}_{G_2}(S)}{n+8(\Delta_{G_1}+1)\mathsf{cut}_{G_1}(S)},
\end{align*}
where $\Delta_{G_1}$ is defined to be the largest degree in $G_1$.
\end{corollary}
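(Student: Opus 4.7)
The plan is to deduce this corollary by directly instantiating Proposition \ref{prop:badapprox} with the signed indicator vector $\chi_S$ and then applying a standard degree bound on $\|L\|$ to clean up the second form of the bound.

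First I would fix any subset $S \subseteq V$ and plug $\mathbf{x} = \chi_S$ into the variational bound from Proposition \ref{prop:badapprox}. By the computations already recorded in the definition of $\mathsf{cut}_G$ in Section \ref{section:sim}, I have $\|\chi_S\|^2 = n$, $\chi_S^T L \chi_S = 4\,\mathsf{cut}_{G_1}(S)$, and $\chi_S^T M \chi_S = 4\,\mathsf{cut}_{G_2}(S)$. So $L$ and $M$ are $(\epsilon,\eta)$-bad spectral approximations along $\chi_S$ with $\epsilon = 4\,\mathsf{cut}_{G_1}(S)$ and $\eta = 4\,\mathsf{cut}_{G_2}(S)$. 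Plugging these values into Proposition \ref{prop:badapprox} yields
\begin{equation*}
    \lambda_n\bigl((I+L)^{-1}M(I+L)^{-1}\bigr) \;\geq\; \frac{4\,\mathsf{cut}_{G_2}(S)}{n + 4(\|L\|+2)\,\mathsf{cut}_{G_1}(S)}.
\end{equation*}
Since this holds for every $S$, taking the maximum over $S \subseteq V$ gives the first inequality of the corollary.

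For the second inequality I only need the classical Gershgorin bound $\|L\| \leq 2\Delta_{G_1}$: each diagonal entry of $L$ equals the weighted degree $d_i \leq \Delta_{G_1}$, and the absolute off-diagonal row sum at row $i$ also equals $d_i$, so every Gershgorin disk is contained in $[0, 2\Delta_{G_1}]$. Hence $\|L\| + 2 \leq 2\Delta_{G_1} + 2 = 2(\Delta_{G_1}+1)$, and replacing $\|L\|+2$ by this upper bound in the denominator of the first inequality (which can only make the fraction smaller) yields the claimed second inequality.

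There is no real obstacle here: once the identities $\chi_S^T L \chi_S = 4\,\mathsf{cut}_{G_1}(S)$ and $\chi_S^T M \chi_S = 4\,\mathsf{cut}_{G_2}(S)$ are in hand, the statement is a one-line specialization of Proposition \ref{prop:badapprox}, and the passage to $\Delta_{G_1}$ is just Gershgorin. The only mild point is to remember that substituting a valid lower bound (the $\Delta_{G_1}$-form) in a denominator preserves the direction of the inequality because the numerator is nonnegative.
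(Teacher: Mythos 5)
Your proposal is correct and follows exactly the paper's route: plug the signed indicator $\chi_S$ into Proposition \ref{prop:badapprox} using $\|\chi_S\|^2=n$, $\chi_S^TL\chi_S=4\,\mathsf{cut}_{G_1}(S)$, $\chi_S^TM\chi_S=4\,\mathsf{cut}_{G_2}(S)$, maximize over $S$, and then apply the Gershgorin bound $\|L\|\leq 2\Delta_{G_1}$ for the second inequality. No differences worth noting.
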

\begin{proof}
The only new statement comes from noticing $\|L\|\leq 2\Delta_{G_1}$; this follows from the Gershgorin circle theorem, as the maximum absolute row sum of $L$ is at most $2\Delta_{G_1}$.
\end{proof}

 By directly analyzing the Rayleigh quotients, we can also obtain a slightly different bound for such vectors:
\begin{proposition}
\label{prop:cut2}
For any $L,M$, 
\begin{equation*}
    \lambda_n((I+L)^{-1}M(I+L)^{-1})\geq \max_{S\subseteq V} \frac{4\mathsf{cut}_{G_2}(S)/n}{(1+2\sqrt{2}\mathsf{cut}_{G_1}(S)/\sqrt{n})^2}
\end{equation*}
\end{proposition}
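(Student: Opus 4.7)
The plan is to apply the variational characterization of eigenvalues exactly as in the proof of Proposition \ref{prop:badapprox}, but rather than bounding $\mathbf{x}^T(I+L)^2\mathbf{x}$ by expanding and using $\|L\|$, I would work directly with $\|(I+L)\chi_S\|^2$ via the triangle inequality. Concretely, for each $S\subseteq V$, plug $\mathbf{x} = \chi_S$ into
\[
\lambda_n\bigl((I+L)^{-1}M(I+L)^{-1}\bigr)=\max_{\mathbf{z}\in \mathbb{R}^n}\frac{\mathbf{z}^TM\mathbf{z}}{\mathbf{z}^T(I+L)^2\mathbf{z}}=\max_{\mathbf{z}\in \mathbb{R}^n}\frac{\mathbf{z}^TM\mathbf{z}}{\|(I+L)\mathbf{z}\|_2^2}.
\]
The numerator is $\chi_S^TM\chi_S=4\,\mathsf{cut}_{G_2}(S)$ by the identity noted at the start of the section, while $\|(I+L)\chi_S\|_2\leq \|\chi_S\|_2+\|L\chi_S\|_2=\sqrt{n}+\|L\chi_S\|_2$, so everything reduces to showing $\|L\chi_S\|_2\leq 2\sqrt{2}\,\mathsf{cut}_{G_1}(S)$.

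To prove that bound, I would compute $(L\chi_S)_i$ componentwise. Writing $\chi_S(i)=\pm 1$ and using $L=D-A$, a short calculation shows
\[
(L\chi_S)_i \;=\; 2\,\mathrm{sgn}(\chi_S(i))\cdot y_i,\qquad y_i:=\sum_{j\,:\,\chi_S(j)\neq \chi_S(i)} w_1(i,j),
\]
so $y_i\geq 0$ measures the weight from $i$ to the opposite side of the cut. Thus $\|L\chi_S\|_2^2 = 4\sum_i y_i^2$. The key step is to split this sum as $\sum_{i\in S}y_i^2 + \sum_{i\notin S}y_i^2$ and apply the elementary inequality $\sum a_i^2\leq (\sum a_i)^2$ on each piece separately (valid since $y_i\geq 0$), using that $\sum_{i\in S}y_i=\sum_{i\notin S}y_i=\mathsf{cut}_{G_1}(S)$ (each cut edge contributes its weight to exactly one vertex on each side). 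This yields $\sum_i y_i^2\leq 2\,\mathsf{cut}_{G_1}(S)^2$, hence $\|L\chi_S\|_2\leq 2\sqrt{2}\,\mathsf{cut}_{G_1}(S)$, as required.

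Combining, $\|(I+L)\chi_S\|_2^2\leq \bigl(\sqrt{n}+2\sqrt{2}\,\mathsf{cut}_{G_1}(S)\bigr)^2 = n\bigl(1+2\sqrt{2}\,\mathsf{cut}_{G_1}(S)/\sqrt{n}\bigr)^2$, so the Rayleigh quotient with test vector $\chi_S$ is at least
\[
\frac{4\,\mathsf{cut}_{G_2}(S)/n}{\bigl(1+2\sqrt{2}\,\mathsf{cut}_{G_1}(S)/\sqrt{n}\bigr)^2},
\]
and taking the maximum over $S\subseteq V$ finishes the proof. The only non-routine step is the splitting trick for $\sum_i y_i^2$; naively applying $\sum y_i^2\leq(\sum y_i)^2$ to all vertices loses a factor of $2$ and yields $\|L\chi_S\|_2\leq 4\,\mathsf{cut}_{G_1}(S)$ instead of $2\sqrt{2}\,\mathsf{cut}_{G_1}(S)$, so isolating the two sides of the cut (where the boundary-weight sums are equal and each is $\mathsf{cut}_{G_1}(S)$) is what gives the stated constant.
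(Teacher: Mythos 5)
Your proof is correct and follows essentially the same route as the paper's: plug $\chi_S$ into the Rayleigh quotient and bound $\|(I+L)\chi_S\|_2\leq \sqrt{n}+\|L\chi_S\|_2$ by the triangle inequality. The only difference is in the key estimate $\|L\chi_S\|_2\leq 2\sqrt{2}\,\mathsf{cut}_{G_1}(S)$: the paper writes $L\chi_S=\sum_{i\in S,\,j\notin S}2w_{G_1}(i,j)(\mathbf{e}_i-\mathbf{e}_j)$ and applies the triangle inequality edge-by-edge, whereas you compute the vector componentwise and apply $\sum a_i^2\leq\bigl(\sum a_i\bigr)^2$ separately on each side of the cut --- both yield the same constant.
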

\begin{proof}
The proof proceeds analogously by plugging in $\chi_S$ into the Rayleigh quotient for $\lambda_n$. Indeed,
\begin{equation*}
    \lambda_n((I+L)^{-1}M(I+L)^{-1})\geq \frac{\chi_S^TM\chi_S}{\|(I+L)\chi_S\|_2^2}=\frac{4\mathsf{cut}_{G_2}(S)}{\|(I+L)\chi_S\|_2^2}.
\end{equation*}
It suffices to upper bound the denominator. By the Triangle Inequality,
\begin{equation*}
    \|(I+L)\chi_S\|_2\leq \|\chi_S\|_2+\|L\chi_S\|_2=\sqrt{n}+\|L\chi_S\|_2.
\end{equation*}
Moreover, again by the Triangle Inequality
\begin{align*}
    \|L\chi_S\|_2&=\bigg\|\sum_{i\in S,j\not\in S}2w_{G_1}(i,j)(\mathbf{e}_i-\mathbf{e}_j)\bigg\|_2\\
    &\leq 2\sqrt{2} \mathsf{cut}_{G_1}(S).
\end{align*}
Plugging in this estimate and factoring out $n$ gives the claim.
\end{proof}

These results show that if the opinion and disagreement graphs are misaligned on even one large cut of $G_2$, then the adversary will be able to induce disagreement far beyond what is possible in the single-graph objective. As an example, consider an extreme case, where $G_2$ is a complete unweighted bipartite graph on $2n$ nodes, while $G_1$ is two $n$-node cliques on both sides of the bipartition with $o(\sqrt{n})$ edges between them. Then if $S$ is one side of the bipartition, $\mathsf{cut}_{G_2}(S)=n^2$, while $\mathsf{cut}_{G_1}(S)=o(\sqrt{n})$. The estimate given by the Proposition \ref{prop:cut2} yields that the adversary can induce disagreement $\approx 2n$, which is tight even up to constants in light of the upper bound in Lemma \ref{lem:matbound}. This is sharper than the generic bound obtained in Corollary \ref{cor:cut}, which is off asymptotically by a factor of $o(\sqrt{n})$.

\section{Discussion and Open Problems}
In this paper, we have shown how several natural adversarial actions on networks that have been prominent in recent years can be modeled as optimization problems with a standard model of opinion dynamics. By leveraging the well-known connections between the spectral and combinatorial structures of graphs, we are able to gain significant insights into the nature of graphs that are resilient to these outside perturbations.


We believe that studying mixed-graph objectives can be a fruitful future direction; as in our results above, this generalization provides an explanation for how disagreement can arise from outside influence far beyond what can be predicted from the spectral theory arising from considering only a single graph. Similarly, network defense problems like the one we consider here will be of continued relevance as attackers can perturb opinions in a more sophisticated fashion than we have considered here. A more refined study of such problems may give significant actionable insights on how to circumvent these new forms of adversarial behavior.

Two natural, albeit difficult, directions are generalizing these sorts of problems and analyses to directed graphs, as well as considering models where graphs and opinions \emph{co-evolve}, as in the Hegselmann-Krause model \cite{hegselmann2002opinion}. Directed graphs, while having significantly less spectral structures, are a more natural model for influences and opinion dynamics on graphs; similarly, one expects that real-world networks tend to change directly as a result of opinion dynamics. However, even simple models like Hegselmann-Krause pose significant mathematical challenges that make this study quite difficult. Overcoming either of these barriers would enable one to study many more naturally occurring sociological phenomena.

\bibliography{AdversaryBib}

\begin{thebibliography}{10}

\bibitem{degroot1974reaching}
M.~H. DeGroot, ``Reaching a consensus,'' {\em Journal of the American
  Statistical Association}, vol.~69, no.~345, pp.~118--121, 1974.

\bibitem{golub2010naive}
B.~Golub and M.~O. Jackson, ``Naive learning in social networks and the wisdom
  of crowds,'' {\em American Economic Journal: Microeconomics}, vol.~2, no.~1,
  pp.~112--49, 2010.

\bibitem{friedkin1990social}
N.~E. Friedkin and E.~C. Johnsen, ``Social influence and opinions,'' {\em J.
  Math. Soc.}, vol.~15, no.~3-4, pp.~193--206, 1990.

\bibitem{hegselmann2002opinion}
R.~Hegselmann and U.~Krause, ``Opinion dyanamics and bounded confidence models,
  analysis, and simulation,'' {\em Journal of Artifical Societies and Social
  Simulation (JASSS) vol}, vol.~5, no.~3, 2002.

\bibitem{iraindictment}
{\em United States of America v. Internet Research Agency, LLC}.
\newblock 2018.

\bibitem{nrc-covert-action}
{National Research Council Committee on Offensive Information Warfare}, {\em
  Technology, Policy, Law, and Ethics Regarding U.S. Acquisition and Use of
  Cyberattack Capabilities}.
\newblock National Academies Press, 2009.

\bibitem{twitter_2019}
Twitter, ``Information operations directed at {H}ong {K}ong,'' August 2019.

\bibitem{senatereport}
U.~S.~S. Select Committee~on Intelligence, {\em Russian Active Measures
  Campaigns and Interference in the 2016 U.S. Election Volume 2: Russia's Use
  of Social Media with Additional Views}.

\bibitem{broniatowski2018weaponized}
D.~A. Broniatowski, A.~M. Jamison, S.~Qi, L.~AlKulaib, T.~Chen, A.~Benton,
  S.~C. Quinn, and M.~Dredze, ``Weaponized health communication: Twitter bots
  and {R}ussian trolls amplify the vaccine debate,'' {\em American Journal of
  Public Health}, vol.~108, no.~10, pp.~1378--1384, 2018.

\bibitem{diresta2019tactics}
R.~DiResta, K.~Shaffer, B.~Ruppel, D.~Sullivan, R.~Matney, R.~Fox, J.~Albright,
  and B.~Johnson, ``The tactics \& tropes of the {I}nternet {R}esearch
  {A}gency,'' 2019.

\bibitem{musco2018minimizing}
C.~Musco, C.~Musco, and C.~E. Tsourakakis, ``Minimizing polarization and
  disagreement in social networks,'' in {\em Proceedings of the 2018 World Wide
  Web Conference}, pp.~369--378, 2018.

\bibitem{batson2012twice}
J.~Batson, D.~A. Spielman, and N.~Srivastava, ``Twice-ramanujan sparsifiers,''
  {\em SIAM Journal on Computing}, vol.~41, no.~6, pp.~1704--1721, 2012.

\bibitem{gionis2013opinion}
A.~Gionis, E.~Terzi, and P.~Tsaparas, ``Opinion maximization in social
  networks,'' in {\em Proceedings of the 2013 SIAM International Conference on
  Data Mining}, pp.~387--395, SIAM, 2013.

\bibitem{bindel2015bad}
D.~Bindel, J.~Kleinberg, and S.~Oren, ``How bad is forming your own opinion?,''
  {\em Games and Economic Behavior}, vol.~92, pp.~248--265, 2015.

\bibitem{chen2020network}
M.~Chen and M.~Z. Racz, ``Network disruption: maximizing disagreement and
  polarization in social networks,'' {\em arXiv preprint arXiv:2003.08377},
  2020.

\bibitem{galeotti2017targeting}
A.~Galeotti, B.~Golub, and S.~Goyal, ``Targeting interventions in networks,''
  {\em arXiv preprint arXiv:1710.06026}, 2017.

\bibitem{chung1997spectral}
F.~R. Chung and F.~C. Graham, {\em Spectral graph theory}.
\newblock No.~92, American Mathematical Soc., 1997.

\bibitem{godsil2013algebraic}
C.~Godsil and G.~F. Royle, {\em Algebraic Graph Theory}, vol.~207.
\newblock Springer Science \& Business Media, 2013.

\bibitem{alon1997spectral}
N.~Alon and N.~Kahale, ``A spectral technique for coloring random 3-colorable
  graphs,'' {\em SIAM Journal on Computing}, vol.~26, no.~6, pp.~1733--1748,
  1997.

\bibitem{delorme1993laplacian}
C.~Delorme and S.~Poljak, ``Laplacian eigenvalues and the maximum cut
  problem,'' {\em Mathematical Programming}, vol.~62, no.~1-3, pp.~557--574,
  1993.

\bibitem{goemans1995improved}
M.~X. Goemans and D.~P. Williamson, ``Improved approximation algorithms for
  maximum cut and satisfiability problems using semidefinite programming,''
  {\em Journal of the ACM (JACM)}, vol.~42, no.~6, pp.~1115--1145, 1995.

\bibitem{trevisan2012max}
L.~Trevisan, ``Max cut and the smallest eigenvalue,'' {\em SIAM Journal on
  Computing}, vol.~41, no.~6, pp.~1769--1786, 2012.

\bibitem{10.1145/3336191.3371825}
U.~Chitra and C.~Musco, ``Analyzing the impact of filter bubbles on social
  network polarization,'' in {\em Proceedings of the 13th International
  Conference on Web Search and Data Mining}, WSDM ’20, (New York, NY, USA),
  p.~115–123, Association for Computing Machinery, 2020.

\bibitem{hoory2006expander}
S.~Hoory, N.~Linial, and A.~Wigderson, ``Expander graphs and their
  applications,'' {\em Bulletin of the American Mathematical Society}, vol.~43,
  no.~4, pp.~439--561, 2006.

\bibitem{alon2004approximating}
N.~Alon and A.~Naor, ``Approximating the cut-norm via grothendieck's
  inequality,'' in {\em Proceedings of the thirty-sixth annual ACM symposium on
  Theory of computing}, pp.~72--80, 2004.

\bibitem{khot2007optimal}
S.~Khot, G.~Kindler, E.~Mossel, and R.~O’Donnell, ``Optimal inapproximability
  results for max-cut and other 2-variable csps?,'' {\em SIAM Journal on
  Computing}, vol.~37, no.~1, pp.~319--357, 2007.

\bibitem{haastad2001some}
J.~H{\aa}stad, ``Some optimal inapproximability results,'' {\em Journal of the
  ACM (JACM)}, vol.~48, no.~4, pp.~798--859, 2001.

\bibitem{bilu2006lifts}
Y.~Bilu and N.~Linial, ``Lifts, discrepancy and nearly optimal spectral gap,''
  {\em Combinatorica}, vol.~26, no.~5, pp.~495--519, 2006.

\bibitem{spielman2011spectral}
D.~A. Spielman and S.-H. Teng, ``Spectral sparsification of graphs,'' {\em SIAM
  Journal on Computing}, vol.~40, no.~4, pp.~981--1025, 2011.

\end{thebibliography}
\bibliographystyle{ieeetr}

\end{document}